\newcommand{\tabincell}[2]{\begin{tabular}{@{}#1@{}}#2\end{tabular}}  
\newtheorem{lemma}{Lemma}
\newtheorem{remark}{Remark}
\newtheorem{example}{Example}
\def\BibTeX{{\rm B\kern-.05em{\sc i\kern-.025em b}\kern-.08em
    T\kern-.1667em\lower.7ex\hbox{E}\kern-.125emX}}
\begin{document}
%
% paper title
% can use linebreaks \\ within to get better formatting as desired
\title{Selfish Mining in Ethereum}

\author{\IEEEauthorblockN{Jianyu Niu and Chen Feng}
\IEEEauthorblockA{School of Engineering, University of British Columbia (Okanagan Campus), Kelowna, Canada\\
\{jianyu.niu, chen.feng\}@ubc.ca}
}

\maketitle

\begin{abstract}
As the second largest cryptocurrency by market capitalization and today's biggest decentralized platform that runs smart contracts, Ethereum has received much attention from both industry and academia. Nevertheless, there exist very few studies about the security of its mining strategies, especially from the selfish mining perspective. 
In this paper, we aim to fill this research gap by analyzing selfish mining in Ethereum and 
understanding its potential threat.
First, we introduce a 2-dimensional Markov process to model the behavior of a selfish mining
strategy inspired by a Bitcoin mining strategy proposed by Eyal and Sirer.
Second, we derive the stationary distribution
of our Markov model and compute long-term average mining rewards.
This allows us to determine the threshold of computational power that makes selfish
mining profitable in Ethereum. We find that this threshold is lower than that in Bitcoin mining
(which is $25\%$ as discovered by Eyal and Sirer), suggesting that Ethereum is more vulnerable to selfish mining than Bitcoin.
\end{abstract}

\IEEEpeerreviewmaketitle

\section{Introduction}

\subsection{Motivation}
The Proof-of-Work (PoW) is the most widely adopted consensus algorithm in blockchain platforms such as Bitcoin \cite{nakamoto2012bitcoin} and Ethereum \cite{buterin2014next}.
By successfully solving math puzzles involving one-way hash functions, the winners of this PoW competition are allowed to generate new blocks that contain as many outstanding transactions as possible (up to the block size limit). As a return, each winner can collect all
the transaction fees and earn a block reward (if its new block is accepted by other participants).
This economic incentive encourages participants to contribute their computation power as much
as possible in solving PoW puzzles---a process often called mining in the literature.

If all the miners follow the mining protocol, each miner will receive block rewards
proportional to its computational power \cite{nakamoto2012bitcoin}. Interestingly, if a set of
colluding miners deviate from the protocol to maximize their own profit, they may obtain a revenue
larger than their fair share. Such behavior is called \emph{selfish mining}  
and has been studied in a seminal paper by Eyal and Sirer in the context of Bitcoin mining \cite{eyal2014majority}. The selfish mining poses a serious threat to any blockchain platform adopting PoW. If colluding miners occupy a majority of the computational power in the system,
they can launch a so-called $51\%$ attack to control the entire system.

Selfish mining in Bitcoin has been well studied with various mining strategies proposed 
(e.g., \cite{sapirshtein2016optimal,nayak2016stubborn,gervais2016security})
and numerous defenses mechanisms suggested (e.g., \cite{zhang2017publish,heilman2014one}).
In sharp contrast, selfish mining in Ethereum 
%is much less understood.
has not received much attention.
Ethereum differs from Bitcoin in that it provides the so-called uncle and nephew rewards in addition to the (standard) block rewards used in Bitcoin \cite{wood2014ethereum}. 
This complicates the analysis. As a result, most existing research results on Bitcoin cannot be directly applied to Ethereum.

\subsection{Objective and Contributions}
In this paper, we aim to fill this research gap by analyzing selfish mining in Ethereum
and understanding its potential threat. To achieve this goal, we first introduce a 2-dimensional Markov process to model the behavior of a selfish mining
strategy inspired by \cite{eyal2014majority}. 
We then derive the stationary distribution of our Markov model and compute long-term average mining rewards.
This allows us to determine the threshold of computational power which makes selfish
mining profitable in Ethereum. We find that this threshold is lower than that in Bitcoin.
In other words, selfish mining poses a more serious threat to Ethereum due to the presence of uncle and nephew rewards. We finally perform extensive simulations to verify our mathematical results and obtain several engineering insights.

Although our mining strategy is similar to that proposed by Eyal and Sirer [3], our analysis is different from theirs in two aspects. First, our Markov model is 2-dimensional whereas their model is 1-dimensional. Second, our analysis tracks block rewards in a probabilistic way whereas their analysis tracks rewards in a deterministic way. It turns out that our 2-dimensional model, combined with the probabilistic tracking, enables us to characterize the effect of uncle and nephew rewards, which is impossible with the 1-dimensional model and deterministic tracking. 
The main contributions of this paper are summarized as follows:
\begin{itemize}
\item We develop a mathematical analysis that can capture the impact of uncle and nephew rewards with selfish mining in detail.
We believe that our analysis can be extended to study more advanced 
selfish mining strategies.

\item Using our theoretical results, we evaluate the threshold of making selfish mining profitable
under different versions of Ethereum proposals with a particular focus on \textsc{EIP100} (which is adopted by the released Byzantium \cite{wood2014ethereum}).
We find that the threshold is lower than that in Bitcoin, suggesting that Ethereum is 
more vulnerable to selfish mining than Bitcoin.

\end{itemize}

%\noindent\textbf{Paper Structure.} 
The rest of the paper is organized as follows.
Section~\ref{sec:primer} provides some necessary background for our work.
Section~\ref{sec:background} introduces a selfish mining strategy for Ethereum inspired by Eyal and Sirer.
Section~\ref{sec:analysis} gives the mathematical analysis and results. 
Section~\ref{sec:evaluation} presents our simulation results.
Section~\ref{sec:solution} discusses how to improve the security of Ethereum.
Related work are discussed in Section~\ref{sec:related}.
Finally, Section~\ref{sec:conclude} concludes the paper.

%%%%%%%%%%%%%%
\section{A Primer on Ethereum}\label{sec:primer}
Ethereum is a distributed blockchain-based platform that runs smart contracts. Roughly speaking, a smart contract is a set of functions defined in a Turing-complete environment. The users of Ethereum are called clients. A client can issue transactions to create new contracts, to send Ether  (internal cryptocurrency of Ethereum) to contracts or to other clients, or to invoke some functions of a contract. The valid transactions are collected into blocks; blocks are chained together through each one containing a cryptographic hash value of the previous block. 

There is no centralized party in Ethereum to authenticate the blocks and to execute the smart contracts. Instead, a subset of clients (called miners in the literature) verify the transactions, generate new blocks, and use the PoW algorithm to reach consensus, receiving Ethers for their effort in maintaining the network. 

\subsection{Blockchain}
Each block in the Ethereum blockchain contains three components: a block header, a set of transactions, and some reference links to certain previous blocks called uncle blocks (whose role will be explained in Sec. \ref{sec:mining rewards}) \cite{wood2014ethereum}. The block header includes a Keccak 256-bit hash value of the previous block, a timestamp, and a nonce (whose role will be explained shortly). See Fig.~\ref{fig:blockchain} for an illustration in which blocks are linked together by the hash references, forming a chain structure. 

Such a chain structure has several desirable features. First, it is tamper-free. Any changes of a block will lead to subsequent changes of all later blocks in the chain. 
Second, it prevents double-spending. All the clients will eventually have the same copy of the blockchain\footnote{More precisely, all the clients will have a ``common prefix'' of the blockchain \cite{garay2015bitcoin}.} so that any transactions involving double-spending will be detected and discarded. 

\begin{figure}[!ht]
\setlength{\abovecaptionskip}{2pt}
\setlength{\belowcaptionskip}{5pt}
\centering
\includegraphics[width=0.9\linewidth]{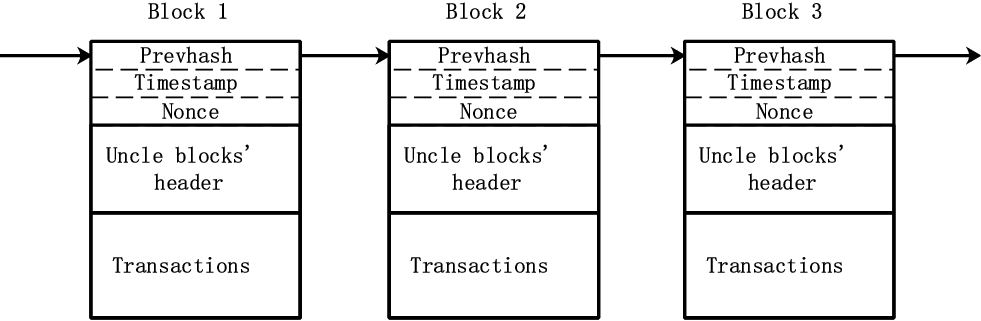}
\caption{An illustration of the blockchain structure in Ethereum.}
\vspace{-2mm}
\label{fig:blockchain}
\end{figure}

\subsection{PoW and Mining} \label{sec:b_mining}
In order for a miner to produce a new valid block in PoW, it needs to find a value of the nonce such that the hash value of the new block is below a certain threshold depending on the difficulty level---a system parameter that can be adjusted. %has required zero bits. 
This puzzle-solving process is often referred to as \emph{mining}.  %and the participant is called a \emph{miner}.  
Intuitively, the mining difficulty determines the chance of finding a new block on each try. By adjusting the mining difficulty, the blockchain system can maintain stable chain growth. 

Once a new block is produced, it will be broadcast to the entire network. In the ideal case, a block will arrive at all clients before the next block is produced. If this happens to every block, then each client in the system will have the same chain of blocks. In reality, the above ideal case doesn't always happen. For example, if a miner produces a new block \emph{before} he or she receives the previous block, a fork will occur where two ``child'' blocks share a common ``parent" block. See Fig.~\ref{fig:fork} for an illustration. 
In general, each client in the system observes a tree of blocks due to the forking. As a result, each client has to choose a main chain from the tree according to certain rules (e.g., the longest chain rule in Bitcoin and the heaviest subtree rule in the GHOST protocol)\footnote{Although Ethereum claimed to apply the heaviest subtree rule \cite{sompolinsky2015secure}, it seems to apply the longest chain rule instead \cite{gervais2016security}.}. The common prefix of all the main chains is called the \emph{system main chain}---a key concept that will be used in our analysis.

\begin{figure}[t]
\centering
\includegraphics[width=0.9\linewidth]{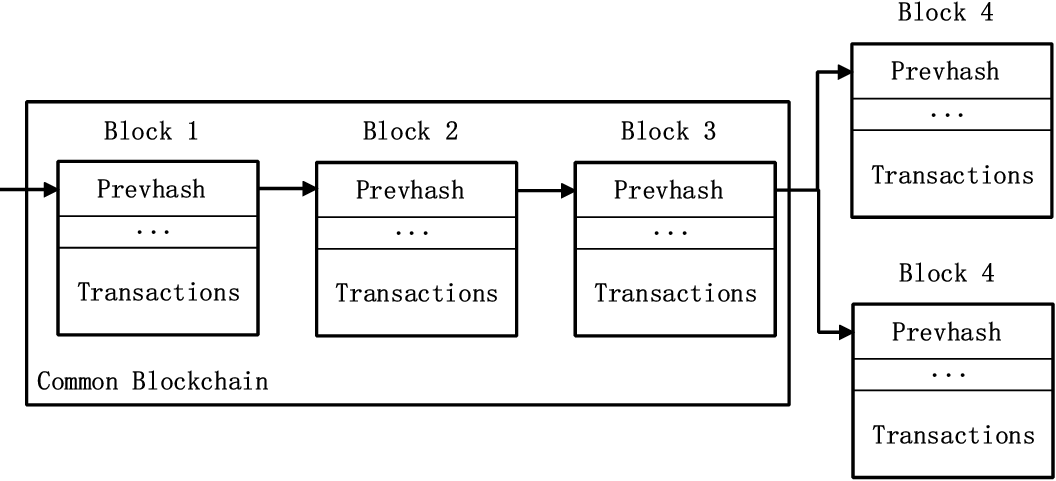}
\caption{An illustration of a forked blockchain.}
\vspace{-2mm}
\label{fig:fork}
\end{figure}

\subsection{Ethereum Milestones} \label{sec:milestone}
Ethereum has four milestones: Frontier, Homestead, Metropolis, and Serenity. We are now in the third milestone where the mining difficulty level depends not only on the growth of the system main chain but also on the appearance of uncle blocks, as suggested in \textsc{EIP100} which is adopted by the released Byzantium \cite{wood2014ethereum}. By contrast, the mining difficulty level of Bitcoin only depends on the growth of the system main chain. Such a difference motivates our work. 

%%%%%%%%%%%%%%
\section{Selfish Mining on Ethereum} \label{sec:background}
\subsection{Mining Model} \label{sec:mining}
In this paper, we consider a system of $n$ miners. The $i$th miner has $m_{i}$ fraction of total hash power. Clearly, we have $\sum_{i=1}^{n}m_{i} = 1$. We assume miners are either honest (those who follow the protocol) or selfish (those who deviate from the protocol in order to maximize their own profit). Let $\mathcal{S}$ denote the set of selfish miners and $\mathcal{H}$ denote the set of honest miners. Let $\alpha$ denote the fraction of total hash power controlled by selfish miners and $\beta$ denote the fraction of total hash power controlled by honest miners. We have $\alpha = \sum_{i \in \mathcal{S}} m_i$ and $\beta = \sum_{i \in \mathcal{H}} m_i$. Clearly, $\alpha + \beta = 1$. Without loss of generality, we assume a single selfish mining pool with $\alpha$ fraction of hash power. 

The PoW mining process can be viewed as a series of Bernoulli trials, each of which independently finds a valid nonce to generate a new block with the same probability, depending on the difficulty level explained in Sec.~\ref{sec:b_mining}. Recall that a Bernoulli process can be approximated by a Poisson process if the duration of a trial is very short and the success probability is very low \cite{gallager2013stochastic}. Both conditions are held in the context of Bitcoin and Ethereum. 
Therefore, we can model the mining process of the $i$th miner as a Poisson process with rate $f m_{i}$. Here, $f$ denotes the block mining rate of the entire system (e.g., $10$ minutes per block in Bitcoin and $10-20$ seconds per block in Ethereum). That is, the $i$th miner generates new blocks at rate $f m_{i}$. Hence, the selfish pool generates blocks at rate $f \alpha$, and the honest miners generate blocks at rate $f \beta$.  This model
has been widely used in the literature. See, e.g.,  \cite{nakamoto2012bitcoin,Papadis2018info, bagaria2018deconstructing}.

\subsection{Mining Rewards} \label{sec:mining rewards}
There are three types of block rewards in Ethereum, namely, static block reward, uncle block reward, and nephew block reward \cite{buterin2014next, ethereum_reward}, as outlined in Table~\ref{tab:rewards}. The static reward is used in both Ethereum
and Bitcoin. To explain static reward, we introduce the concepts of \emph{regular} and \emph{stale} blocks.
A block is called regular if it is included in the system main chain, and is called stale block otherwise. Each regular block in Ethereum can bring its miner a reward of exactly 3.0 Ethers as an economic incentive. 

The uncle and nephew rewards are unique in Ethereum. An uncle block is a stale block that is a ``direct child'' of the system main chain.
In other words, the parent of an uncle block is always a regular block. An uncle block receives a certain reward if it is referenced
by some future regular block, called a nephew block, through the use of reference links. See Fig.~\ref{fig:ghost} for an illustration
of uncle and nephew blocks. The values of uncle rewards depend on the ``distance'' between the uncle and nephew blocks.
This distance is well defined because all the blocks form a tree. For instance, in Fig.~\ref{fig:ghost}, 
the distance between uncle block $B3$ (uncle block $D2$, resp.) and its nephew block is $1$ ($2$, resp.).
In Ethereum, if the distance is $1$, the uncle reward is $\frac{7}{8}$ of the (static) block reward; if the distance is $2$,
the uncle reward is $\frac{6}{8}$ of the block reward and so on. Once the distance is greater than $6$, the uncle reward will be zero. By contrast, the nephew reward is always $\frac{1}{32}$ of the block reward. In addition to blocks rewards, miners can also receive gas costs as a reward for verifying and executing all the transactions\cite{buterin2014next}. However, the gas cost is dwarf with other rewards, and so we ignore it in our analysis. 

We use $K_s$, $K_u$, and $K_n$ to denote static, uncle, and nephew rewards, respectively. Without loss of generality, we assume that $K_s = 1$ so that $K_u$ ($K_n$, resp.) represents the ratio of uncle  reward (nephew reward, resp.) to the static reward. As we explained before, in the current version of Ethereum, $K_n < K_u < 1$, and $K_u$ is a function of the distance. As we will see later, our analysis allows $K_u$ and $K_n$ to be any functions of the distance.

\begin{table}
\caption{Mining Rewards in Ethereum and Bitcoin}
\label{tab:rewards}
\begin{center}
\begin{tabular}{cccl}
\toprule
    & Ethereum   & Bitcoin & Purpose \\
\midrule
    \tabincell{c}{Static Reward}      & $\checkmark$ & $\checkmark$ &   \tabincell{c}{Compensate for miners'\\mining cost}\\ 
    Uncle Reward    & $\checkmark$ & $\times$ & \tabincell{c}{Reduce centralization \\ trend of mining}\\  
    Nephew Reward   & $\checkmark$ & $\times$ & \tabincell{c}{Encourage miners to  \\ reference uncle blocks}\\
        \tabincell{c}{Transaction Fee \\ (Gas Cost)}& $\checkmark$ & $\checkmark$ & \tabincell{c}{Transaction execution; \\Resist network attack}\\ 
  \bottomrule
\end{tabular}
\end{center}
\end{table}

\begin{figure}[t]
\centering
\includegraphics[width=0.9\linewidth]{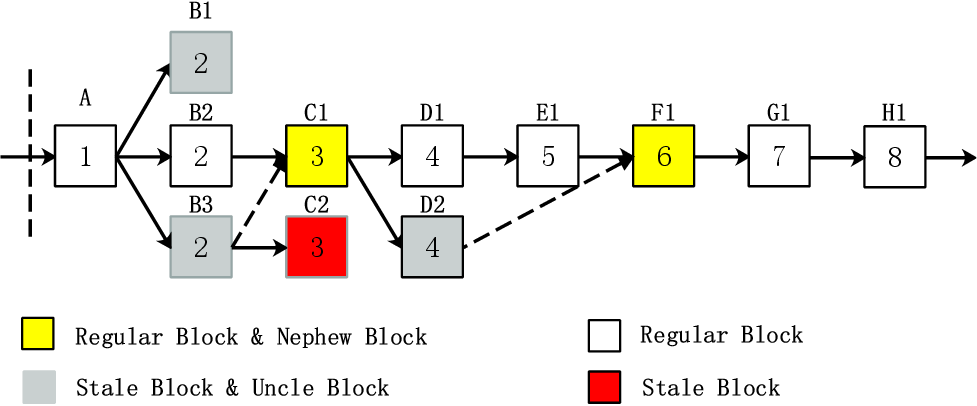}
\caption{Different block types in Ethereum. Here, regular blocks include $\{A, B2, C1, D1, E1, F1, G1, H1\}$ and stale blocks include $\{B1, B3, C2, D2\}$.  Similarly, uncle blocks are $\{B1, B3, D2\}$
and nephew blocks are $\{C1, F1\}$. Uncle block B3 (uncle block D2, resp.) is referenced with distance one (two, resp.).}
\vspace{-2mm}
\label{fig:ghost}
\end{figure}

\subsection{Mining Strategy} \label{sec:strategy}
We now describe the mining strategies for honest and selfish miners. 
The honest miners follow the protocol given in Sec.~\ref{sec:b_mining}.
Each honest miner observes a tree of blocks. It chooses a main chain from
the tree and mines new blocks on its main chain. Once a new block is produced,
the miner broadcasts the block to everyone in the system. Also, it includes as
many reference links as possible to (unreferenced) uncle blocks in the tree.

\begin{figure}[t]
\centering
\includegraphics[width=0.75\linewidth]{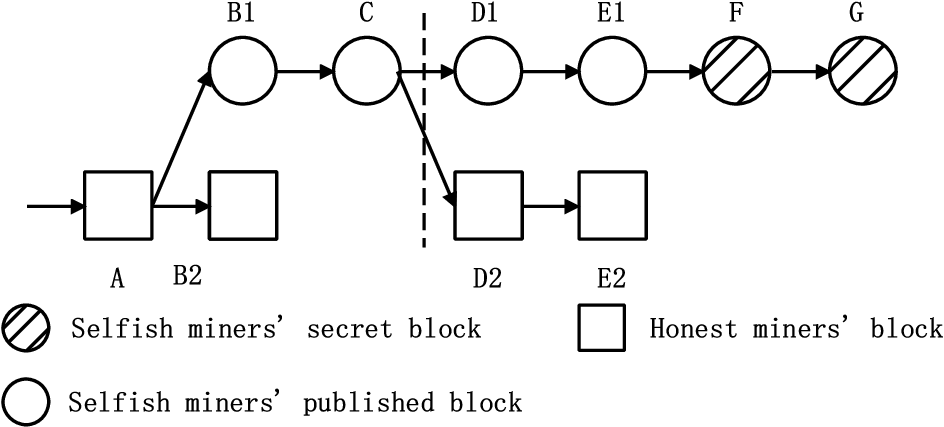}
\caption{An example to illustrate private branch length and public branch length.}
\vspace{-2mm}
\label{fig:lead}
\end{figure}

By contrast, the selfish pool can withhold its newly mined blocks and publish them strategically to maximize its own revenue.
The basic idea behind selfish mining is to increase the selfish pool's share of static rewards 
and, at the same time, to gain as many uncle and nephew rewards as possible. 
Specifically, the selfish pool keeps its newly discovered blocks private, creating a fork on purpose. The pool then
continues to mine on this private branch, while honest miners still mine on public branches (which are often shorter than the private branch). 

Fig.~\ref{fig:lead} gives an example in which ``circle'' blocks are mined by the pool and ``square'' blocks are mined by honest miners. In this example, the private branch consists of $4$ blocks $(D1, E1, F, G)$, all of which are mined by the pool with $(D1, E1)$ published and $(F, G)$ still private. There are two public branches, namely, $(D1, E1)$ and
$(D2, E2)$, because honest miners can see both branches. Each honest miner then chooses one public branch to mine new blocks
according to certain rules (e.g., the longest chain rule). Here, the two public branches are of equal length. This is not a coincidence. In fact, we can show that public branches always have
the same length under our selfish mining strategy.

Let $L_s(t)$ be the length of the private branch seen by the selfish pool at time $t$.
Similarly, let $L_h(t)$ be the length of public branches seen by honest miners at time $t$. 
(Note that $L_h(t)$ is well defined because all public branches have the same length.) We are now ready to describe our selfish mining strategy
which is based on the strategy in \cite{eyal2014majority}.

\begin{algorithm}[ht]
		\caption{An selfish Mining Strategy in Ethereum}
		\label{Algo1}
		\begin{algorithmic}[1]
% 			\Statex \hspace{-1.8em}\textbf{on}  consensus
% 		    \State  $privateBranchLen = 0$       
% 			\State  $publicBranchLen = 0$  
% 			\State  mine at the head of the private chain  
			
% 			\Statex
			\Statex \hspace{-1.8em}\textbf{on}  The selfish pool mines a new block 
			\State reference all (unreferenced) uncle blocks based on its private branch
			\State $L_s \leftarrow L_s + 1$
			    %\State  $\Delta = privateBranchLen - publicBranchLen$
			    %\State  $privateBranchLen = privateBranchLen + 1$   
			    \If{$(L_s, L_h) = (2, 1)$}
			        \State  publish its private branch
			        \State  $(L_s, L_h) \leftarrow (0, 0)$ (since all the miners achieve a consensus)
			    \Else
			        \State  keep mining on its private branch
			    \EndIf
			\Statex
			
			\Statex \hspace{-1.8em}\textbf{on}  Some honest miners mine a new block 
			\State The miner references all (unreferenced) uncle blocks based on its public branches
			%\State  $\Delta = privateBranchLen - publicBranchLen$
			\State  $L_h \leftarrow L_h + 1$    
			\If{$L_s < L_h$}
			    \State  $(L_s, L_h) \leftarrow (0, 0)$
			    \State keep mining on this new block
			\ElsIf{$L_s = L_h$ }
			    \State   publish the last block of the private branch
			\ElsIf{$L_s = L_h + 1$ }
			    \State   publish its private branch
			    \State  $(L_s, L_h) \leftarrow (0, 0)$ (since all the miners achieve a consensus)
			\Else
			    \State   publish first unpublished block in its private branch
			    \State   set $(L_s, L_h) = (L_s - L_h + 1, 1)$ if the new block is mined on 
			    a public branch that is a prefix of the private branch
			\EndIf
			
		\end{algorithmic}
\end{algorithm}

Algorithm~\ref{Algo1} presents the mining strategy. When the selfish pool mines a new block (see lines $1$ to $7$),
it will keep this block private and continue mining on its private branch until its advantage is 
very limited (i.e., $(L_s, L_h) = (2, 1)$) which will be discussed later.

When some honest miners mine a new block, the length of a public branch will be increased by $1$.
We have the following cases. Case 1) If the new public branch is longer than the private branch, 
the pool will adopt the public branch and mine on it. (That is why the pool will set $(L_s, L_h) = (0, 0)$.)
Case 2) If the new public branch has the same length as the private branch, the pool will publish
its private block immediately hoping that as many honest miners will choose its private branch as possible
(since honest miners will see two branches of the same length when the private branch is published).
Case 3) If the new public branch is shorter than the private branch by just $1$, the pool will
publish its private branch so that all the honest miners will adopt the private branch.
Case 4) If the new public branch is shorter than the private branch by at least $2$, the pool 
will publish the first unpublished block since the pool still has a clear advantage.
Moreover, if the new block is mined on a public branch that is a prefix of the private branch,
the pool will set $(L_s, L_h) = (L_s - L_h + 1, 1)$ due to a new forking (caused by the honest miner).

To better illustrate the selfish mining strategy, we provide an example in Fig.~\ref{fig:mining}.
In Step 1, we have $(L_s, L_h) = (3, 0)$. In Step 2, some honest miner publishes block $A2$
and we have $(L_s, L_h) = (3, 1)$. This corresponds to Case 4). Hence, the pool immediately publishes
block $A1$, still having an advantage of $2$ blocks. In Step 3, some honest miner publishes block $B2$,
leading to $(L_s, L_h) = (3, 2)$. This corresponds to Case 3). Thus, the pool publishes its private branch, making honest miners' blocks ($A2$ and $B2$) stale.

\begin{remark}
The selfish mining strategy presented above isn't necessarily optimal. By studying its behavior, we hope to reveal some characteristics of the selfish mining in Ethereum. 
\end{remark}

\begin{figure*}[tbh]
    \center
    \begin{tabular}{ccc}
        \includegraphics[width=0.2\textwidth]{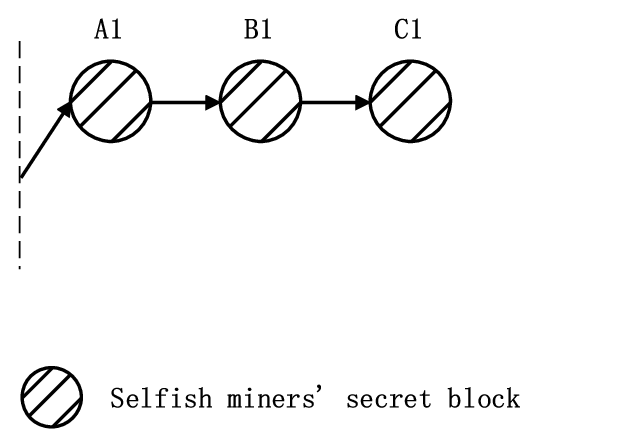} &
        \includegraphics[width=0.2\textwidth]{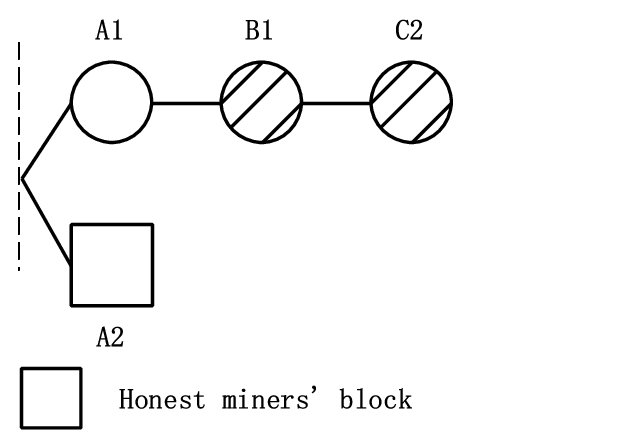} & 
        \includegraphics[width=0.2\textwidth]{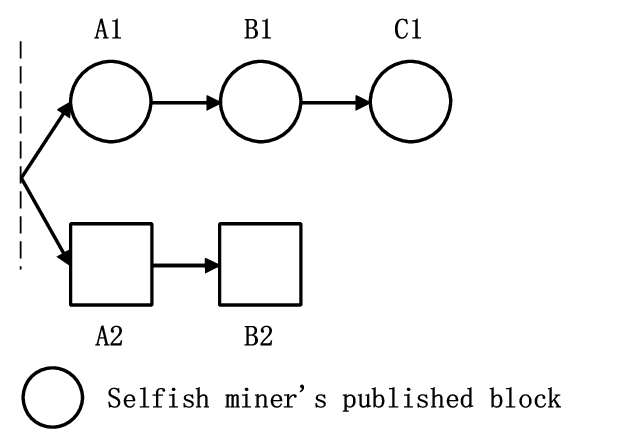} \\
            a) Step 1: selfish pool withholds 3 blocks & b) Step 2: selfish pool publishes 1 block & b) Step 3: selfish pool overrides 2 blocks
    \end{tabular}
    \caption{A simple example of the mining strategy.}
    \label{fig:mining}
\end{figure*}

\subsection{Mining Pool} \label{sec:pool}
%We discuss possible values for $\alpha$ and $\gamma$ in current Ethereum system. 

%\emph{Value for $\alpha$:} 

\begin{figure}[ht]
\centering
\includegraphics[width=0.8\linewidth]{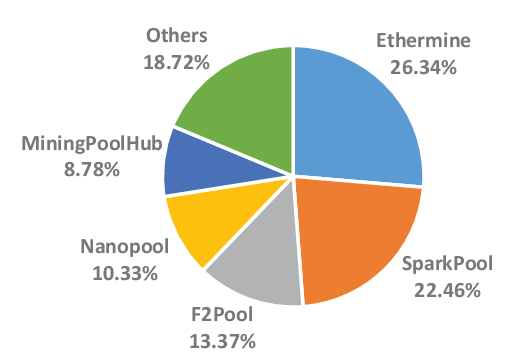}
\caption{The top 5 mining pools' hash power in Ethereum (2018.09).}
\vspace{-2mm}
\label{fig:pool}
\end{figure}

In Ethereum, individual miners can form mining pools to mine blocks together
and share the revenue according to individuals' hash power. Fig.~\ref{fig:pool} 
presents the fractions of the hash power of various mining pools in Ethereum \cite{ethereumscan}. 
The largest mining pool (called Ethermine) has dominated $26.34\%$ of the total hash power.
The top two mining pools have dominated $48.8\%$ of the total hash power.
The top five mining pools have more than $81\%$ of the total hash power. 
Although these mining pools are not necessarily selfish, their presence motivates
us to understand the impact of selfish mining in Ethereum.

\section{Analysis of Selfish Mining} \label{sec:analysis}
In this section, we will study the long-term behavior of the selfish mining strategy using a Markov model
with a particular focus on the mining revenue. 

\begin{figure*}[t]
\centering
\includegraphics[width=0.8\linewidth]{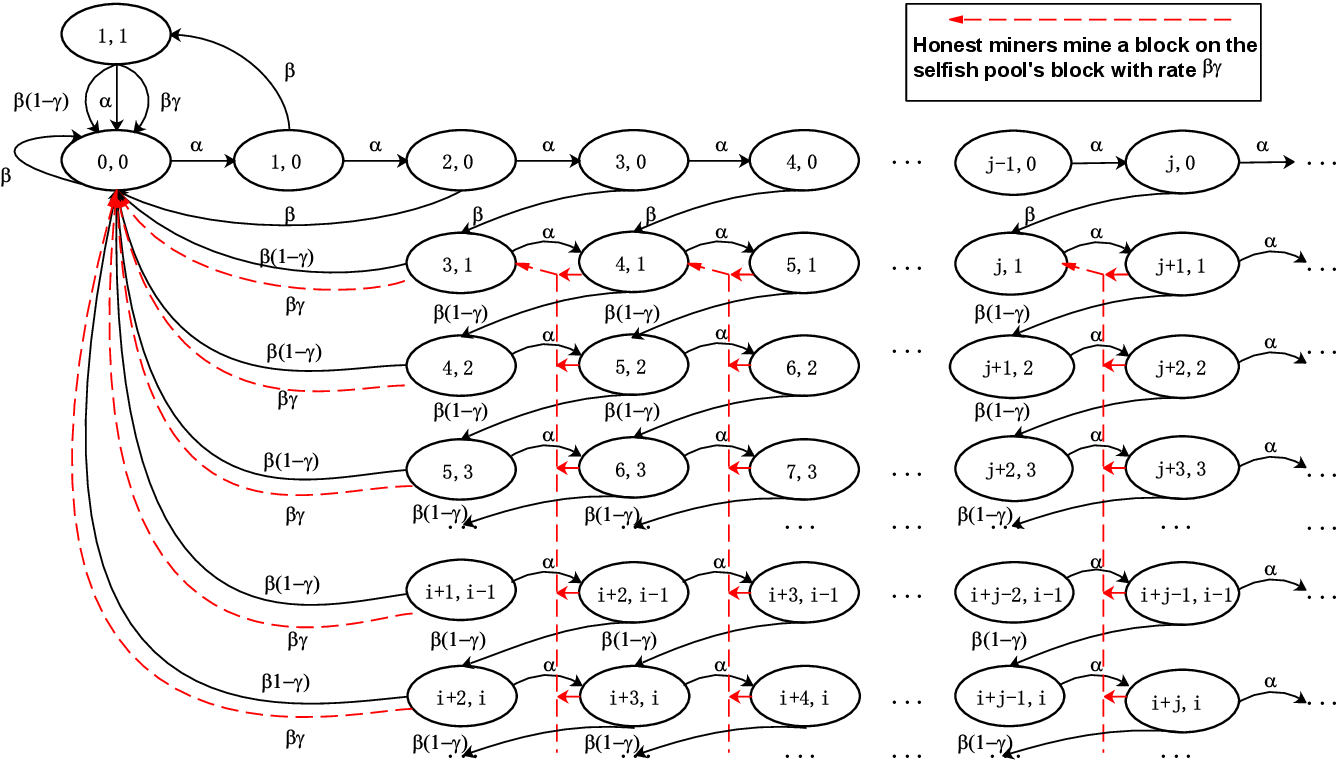}
\caption{The Markov process of the selfish mining in Ethereum.}
\vspace{-2mm}
\label{fig:state}
\end{figure*}

\subsection{Network Model}

To simplify our analysis, we follow the network model of \cite{eyal2014majority, sapirshtein2016optimal, gervais2016security}
which assumes that the time it takes 
to broadcast a block is negligible. In particular, we introduce the same parameter $\gamma$
as in \cite{eyal2014majority}, which denotes the ratio of honest miners that 
are mining on blocks produced by the selfish pool (rather than by the honest miners) 
whenever they observe 
a fork of two branches of equal length.
For example, if the honest miners apply the uniform tie-breaking rule (when they observe 
a fork of two branches of equal length), then $\gamma = \frac{1}{2}$.
On the other hand, if the pool can launch a network attack to influence honest miners' block propagation,
then only a few honest miners can see any new block produced by some honest miners.
In this case, the parameter $\gamma$ is close to $1$. 
Therefore, the parameter $\gamma$ captures the pool's communication capability.
In this paper, we assume that $\gamma$ takes values in the interval $[0, 1]$.

\subsection{Markov Process}

For ease of presentation, we re-scale the time axis so that the selfish pool generates new blocks at rate $\alpha$
and the honest miners generate new blocks at rate $\beta$. We are now ready to define the system state.
Recall that $L_s(t)$ is the length of the private branch 
and $L_h(t)$ is the length of the public branches at time $t$. 
Clearly, $\left ( L_s(t), L_h(t) \right )$ captures the system state at time $t$.
The state space contains the following states:
$(0, 0)$, $(1, 0)$, $(1, 1)$, as well as $(i, j)$ with $i - j \ge 2$ and $j \ge 0$.
It is easy to verify that $\left ( L_s(t), L_h(t) \right )$ evolves as a Markov process
under our selfish mining strategy and the network model, as illustrated in Fig.~\ref{fig:state}. Moreover, we can show that the process $\left ( L_s(t), L_h(t) \right )$ is positive recurrent
and so it has a unique stationary distribution.

\subsection{The Stationary Distribution} \label{sec:distribution}
To compute the stationary distribution of the process $\left ( L_s(t), L_h(t) \right )$, 
we need to derive the transition rates for the state evolution. The results are provided below.

\begin{itemize}
\item $q_{(0, 0), (0, 0)} = \beta$ \\
This transition happens if any honest miner produces a new block, broadcasts it
to everyone. Then, the selfish pool adopts the public branch and mines on it. 
Thus, the rate is $\beta$.

\item $q_{(0, 0), (1, 0)} = \alpha$ \\
This transition happens if the pool produces a new block and keeps it private. 
Thus, the rate is $\alpha$.

\item $q_{(1, 0), (2, 0)} = \alpha$ \\
This transition happens if the pool produces a new block and keeps it private. 
Thus, the rate is $\alpha$.

\item $q_{(1, 0), (1, 1)} = \beta$ \\
This transition happens if any honest miner produces a new block and the pool
immediately publishes its private block (because the new public branch
has the same length as the private branch). Thus, the rate is $\beta$.

\item $q_{(1, 1), (0, 0)} = \alpha + \beta  = 1$ \\
This transition happens if any of the following events happens.
1) The pool produces a new block and publishes its private branch (because $(L_s, L_h) 
= (2, 1)$);
2) Any honest miner produces a new block and the pool has to publish its private branch
(because $L_s = L_h + 1$). 
Thus, the rate is $\alpha + \beta = 1$.

\item $q_{(i, j), (i+1, j)} = \alpha$ for $i \geq 2$ and $j \geq 0$\\
This transition happens if the pool produces a new block and keeps it
private. Thus, the rate is $\alpha$.

\item $q_{(i, j), (i-j, 1)} = \beta \gamma$ for $i - j \geq 3$ and $j \geq 1$\\
This transition happens if any honest miner mines a new block on a public branch
which is a prefix of the private branch. Then, the pool publishes 
a private block accordingly. Thus, the rate is $\beta \gamma$.

\item $q_{(i, j), (0, 0)} = \beta$ for $i - j = 2$ and $j \geq 1$ \\
This transition happens if any honest miner produces a new block and then the pool
publishes its private branch (because $L_s = L_h + 1$). Thus, the rate is $\beta$.

\item $q_{(2, 0), (0, 0)} = \beta$ \\
This transition happens if any honest miner produces a new block and then the pool
publishes its private branch (because $L_s = L_h + 1$). Thus, the rate is $\beta$.

\item $q_{(i, 0), (i, 1)} = \beta$ for $i \geq 3$\\
This transition happens if any honest miner produces a new block and the pool
publishes a private block. Thus, the rate is $\beta$.

\item $q_{(i, j), (i, j+1)} = \beta(1 - \gamma)$ for $i - j \geq 3$ and $j \geq 1$ \\
This transition happens if any honest miner 
mines a new block on a public branch which is not a prefix of the private branch.
Thus, the rate is $\beta ( 1 - \gamma)$.
\end{itemize}

Let $\{ \pi_{i,j} \}$ be the steady-state distribution of the Markov process $\left ( L_s(t), L_h(t) \right )$. Then, the set
$\{ \pi_{i,j} \}$ satisfies the following global balance equations according to the above transition rates.
\begin{equation}
\label{eq:state}
\left\{
\begin{array}{lr}
\alpha \pi_{0,0} = \pi_{1,1} + \beta \sum_{j = 0}^{\infty} \pi_{2+j,j}, \\
\pi_{1,1} = \beta\pi_{1,0}, \\
\pi_{3,1} = \beta\pi_{3,0} +  \sum_{j = 1}^{\infty} \beta \gamma \pi_{3+j,j}, \\
\pi_{i,0} = \alpha \pi_{i-1,0}, \mbox{for }  i \geq 1, \\
\pi_{i,1} = \beta\pi_{i,0} +  \alpha \pi_{i-1,1} + \sum_{j = 1}^{\infty} \beta \gamma \pi_{i+j,j} , \mbox{for }  i \geq 4,\\
\pi_{i,i-2} = \beta \left( 1 -\gamma \right)  \pi_{i,i-1} , \mbox{for }  i \geq 4,\\
\pi_{i,j} = \alpha \pi_{i-1,j} + \beta \left( 1 -\gamma \right) \pi_{i,j-1} , \mbox{for }  j \geq 2, i \geq 5 . \\
\end{array}
\right.
\end{equation}

Solving the global balance equations, we obtain the following results for the stationary distribution:
\begin{align*}
    \pi_{0, 0} &= \frac{1-2\alpha}{2\alpha^{3} - 4\alpha^{2} + 1}, \\
    \pi_{i,0} &= \alpha^{i} \pi_{0,0} \mbox{ for } i \geq 1, \\
    \pi_{1,1} &= \left( \alpha -\alpha^{2} \right) \pi_{0,0}, \\
    \pi_{i,j} &= \alpha^{i} \left( 1- \alpha \right)^{j}\left( 1- \gamma \right)^{j} f(i,j,j)
    \pi_{0,0} + \\
    &\quad \alpha^{i-j} \gamma \left( 1- \gamma \right)^{j-1} \left( \frac{1}{\left( 1- \alpha \right)^{i-j-1}} - 1\right) \pi_{0,0}  - \\
    &\quad \gamma \left( 1- \gamma \right)^{j-1} \sum_{k=1}^{j}   \alpha^{i-k}\left( 1- \alpha \right)^{j-k}f(i,j,j-k) \pi_{0,0}
\end{align*}
for $i\geq j+2$ and $j\geq 1$, where the function $f(x, y, z)$ is defined as 

\begin{equation}
\label{eq:fcn}
f(x,y,z) =
\left\{
\begin{array}{lr}
\underset{z}{\underbrace{\sum_{s_{z} = y+2}^{x}\sum_{s_{z-1} = y+1}^{s_{z}} .. \sum_{s_{1} = y-z+3}^{s_{2}}}}{1}, \hfill z\geq 1, x\geq y+2,  \\
0,  \hfill \mbox{otherwise}.
\end{array}
\right.
\end{equation}
The function $f(x, y, z)$ is a  multiple summations. See Appendix~\ref{appen:function} for some
concrete examples. 

We have the following remarks for the stationary distribution in Equation~\eqref{eq:fcn}.
\begin{remark}
When $0 < \alpha < \frac{1}{2}$, we have $0 < \pi_{0, 0} < 1$. Specifically, the distribution $\pi_{0, 0}$ only depends on the parameter $\alpha$ and is monotonically decreasing. The tendency of $\pi_{0, 0}$ suggests that with more hash power, the selfish pool can obtain more lead blocks and so stay in state $(0, 0)$ less frequently.
\end{remark}

\begin{remark}
The distributions $\pi_{i, 0}$ with $i \geq 1$ are decreasing geometrically and are less than $10^{-6}$ when $i \geq 15$ when $\alpha = 0.4$. It suggests that we can truncate the states in the numerical calculation. 
\end{remark}

% \begin{example}[$z = 3$, $x \ge y + 2$]
% \begin{align*}
%     f(x, y, 2) &= \sum_{s_3 = y+2}^x \sum_{s_2 = y + 1}^{s_3} \sum_{s_1 = y}^{s_2} 1 \\
%     &= \sum_{s_2 = y + 2}^x (s_2 - y) \\
%     &= 2 + \cdots + (x - y) \\
%     &= \frac{(x - y - 1)(x - y + 2)}{2}.
% \end{align*}
% \end{example}

\subsection{Reward Analysis} \label{sec:regular}
%In this subsection, we mainly provide some key insights into our reward analysis. With the insights, it's easy to figure out the transition reward for each state, and thereafter to compute the revenues for the selfish pool and honest miners. For a better presentation flow, we leave the explicit analysis in Appendix~\ref{appen:reward}.
In this subsection, we conduct the reward analysis for each state transition. Our analysis differs from the previous analysis (e.g., \cite{eyal2014majority,nayak2016stubborn}) in that we track various block rewards in a probabilistic way. Recall that each state transition induces a new block (mined by an honest miner or the pool). In general, it is impossible to decide the number of rewards associated with this new block when it is just created because the ``destiny" of this new block depends on the evolution
of the system. For this reason, we will instead compute the expected rewards for the new block. 
In contrast, the previous analysis tracks published blocks associated with a state transition (whose destiny is already determined)  
rather than the new block and so it can compute the exact rewards. This gives rise to the following two questions.

\begin{enumerate}
\item What is wrong with tracking published blocks?
\item How shall we compute the expected rewards for a new block at the time of its creation?
\end{enumerate}

To answer the first question, one shall notice that tracking published blocks don't provide enough information to compute the uncle and nephew rewards. Recall from Sec.~\ref{sec:mining rewards} that a published regular block can receive nephew rewards by referencing outstanding uncle blocks. The amount of nephew rewards depends on the number of outstanding uncle blocks. As such, we need to
keep track of all the outstanding uncle blocks in the system together with their depth information (which is needed to determine the number
of uncle rewards). This greatly complicates the state space.  

To answer the second question, one shall notice that it suffices to compute the expected rewards for a new block by using the following
information: the probability that it becomes a regular block, the probability that it becomes an uncle block, the distance to its potential nephew block (if it indeed becomes an uncle block). Perhaps a bit surprisingly, all the information can be determined when this new
block is generated for our selfish mining strategy.

The complete analysis is provided in Appendix~\ref{appen:reward} for a better presentation flow. Here, we just provide a simple example to illustrate our analysis. Assume that the selfish pool has already mined two blocks and kept them private at time $t$. Then, some honest miner generates a new block. According to Algorithm~\ref{Algo1}, the pool publishes its private branch
immediately. As such, this new block will become an uncle block with probability $1$. Furthermore, we can show that this block
will have a distance of $2$ with its potential nephew block. Thus, this new block will receive an uncle reward of $K_u(2)$.
Similarly, its potential nephew block will receive a nephew reward of $K_n(2)$. Moreover, this reward will belong to some
honest miner with probability $\beta (1 + \alpha \beta (1 - \gamma))$ and belong to the pool with probability $1 - \beta (1 + \alpha \beta (1 - \gamma))$.  (See \emph{Case 7} in Appendix~\ref{appen:reward} for details.) Therefore, the expected rewards associated
with this new block are $K_u(2) + K_n(2)$ in total among which $K_u(2) + K_n(2) \beta (1 + \alpha \beta (1 - \gamma))$
rewards will belong to honest miners (and the remaining will belong to the pool).

\subsection{Revenue Analysis} \label{sec:revenue}
In this subsection, we apply the previous reward analysis to compute various rewards received by the selfish pool and honest miners.
This calculation is straightforward. 

\subsubsection{Revenue Computing} \label{sec:revenuecomputing}
First, we compute the static block rewards for the selfish pool (denoted as $r_{b}^{s}$) and honest miners (denoted as $r_{b}^{h}$). We have the following results:
\begin{equation}
\label{incen:in2}
\begin{aligned}
 r_{b}^{s} &= \left( \alpha  (1 - \pi_{0,0}) +  (\alpha^2 + \alpha^2 \beta + \alpha \beta^2 \gamma)  \pi_{0,0} \right) \\
               &= \alpha  - \alpha \beta^2 (1 - \gamma)  \pi_{0,0}, \\
               &= \frac{\alpha (1-\alpha)^2 (4\alpha + \gamma(1 - 2\alpha)) - \alpha^3}{2\alpha^3-4\alpha^2+1}
\end{aligned}      
\end{equation}
and
\begin{equation}
\label{incen:in3}
\begin{aligned}
r_{b}^{h} &= \beta  ( \pi_{0,0} + \pi_{1,1}) + \beta^2 (1 - \gamma)  \pi_{1,0} \\
            &= \frac{(1-2\alpha)(1-\alpha)(\alpha(1-\alpha)(2-\gamma) + 1) }{2\alpha^3-4\alpha^2+1}.
\end{aligned}
\end{equation}
Note that $r_b^s$ and $r_b^h$ represent the long-term average static rewards per time unit.
Since all the miners generate new blocks at rate $1$, the maximum long-term average reward is 
$1$ per time unit. Hence, we have $r_b^s + r_b^h \le 1$.

\begin{remark}
If we only consider static rewards, the above results are the same as those in \cite{eyal2014majority},
though our approach is different from that in \cite{eyal2014majority}.
\end{remark}

%%%%%%%%%%%%%%
Next, we can compute the uncle block rewards for the selfish pool (denoted as $r_{u}^{s}$):
\begin{equation}\label{incen:in4}
\begin{aligned}
r_{u}^{s} &= \alpha \beta^2 (1 - \gamma) K_u(1) \pi_{0,0} \\
            &= \frac{(1-2\alpha)(1-\alpha)^2 \alpha (1-\gamma) }{2\alpha^3-4\alpha^2+1} K_u(1).
\end{aligned}
\end{equation}  

\begin{remark}
Note that $r_{u}^{s}$ is zero in Bitcoin. In other words, selfish pools' blocks without rewards can be viewed as the ``cost" of launching the selfish mining attack. Thus the additional reward in Ethereum will reduce the cost of selfish mining and make it easier. Moreover, as shown in our reward analysis, 
the uncle blocks of the pool are always referenced with distance $1$---the minimum referencing distance possible in the system. Intuitively, this is because the pool has a global view of the system. 
%make the pool can obtain the maximum reward for its uncle blocks using the function $K_u(\cdot)$ in Ethereum.
\end{remark}

Similarly, we can compute the uncle block rewards for the honest miners (denoted as $r_{u}^{h}$):
\begin{multline} \label{incen:in5}
    r_{u}^{h} =  (\alpha \beta + \beta^2 \gamma ) K_u(1)  \pi_{1,0} + \sum_{i=2}^{\infty} \beta K_u(i) \pi_{i,0} +  \\
    + \sum_{i = 2}^{\infty} \sum_{j = 1}^{\infty} \beta \gamma K_u(i) \pi_{i+j,j}.
\end{multline}
\begin{remark}
In the current version of Ethereum, the function $K_u(\cdot)$ is given below:
\begin{equation}
\label{eq:fcn1}
K_u(l) =
\left\{
\begin{array}{lr}
(8 - l)/8, & 1 \leq l \leq 6 \\
0,  & \text{otherwise}.
\end{array}
\right.
\end{equation}
Our analysis applies to an arbitrary function of $K_u(\cdot)$.
\end{remark}

Then, we can compute the nephew block rewards for the selfish pool (denoted as $r_{n}^{s}$) and honest miners (denoted as $r_{n}^{h}$):
\begin{equation}\label{incen:in6}
    r_{n}^{s} = \alpha \beta K_s(1) \pi_{1,0} + \sum_{i = 2}^{\infty} \sum_{j = 1}^{\infty} \beta^{i-1} \gamma ( \alpha - \alpha \beta^2 (1 - \gamma)) K_s(i) \pi_{i+j,j}
\end{equation}   
\begin{multline} \label{incen:in7}
    r_{n}^{h} = \alpha \beta^2 (1- \gamma) K_s(1) \pi_{0,0} + \beta^2 \gamma K_s(1) \pi_{1,0} + \\
    +\sum_{i = 2}^{\infty} \sum_{j = 1}^{\infty} \beta^{i} \gamma (1 + \alpha \beta (1 - \gamma)) K_s(i) \pi_{i+j,j}.
\end{multline} 

\begin{remark}
In the current version of Ethereum, the function $K_n(\cdot)$ is always equal to $\frac{1}{32}$.
Our analysis applies to an arbitrary function of $K_n(\cdot)$.
\end{remark}

Finally, we can obtain the total mining revenue $r_{\text{total}}$ as
\begin{equation}\label{incen:in8}
   r_{\text{total}} = r_{b}^{s} +  r_{b}^{h} +  r_{u}^{s} +  r_{u}^{h} +  r_{n}^{s} + r_{n}^{h}.
\end{equation} 
Hence, 
\[
R_{s} = \frac{r_{b}^{s} + r_{u}^{s} + r_{n}^{s}}{r_{\text{total}}}
\]
gives
the share of the mining revenue by the selfish pool.

\subsubsection{Absolute Revenue} \label{sec:absolute}
We now define the absolute revenue $U_s$ for the selfish pool. As we will soon see,
although the absolute revenue is equivalent to the relative revenue (i.e., the share $R_s$)
in Bitcoin, it is different from the relative revenue in Ethereum due to the presence
of uncle and nephew rewards.

Recall that Bitcoin adjusts the mining difficulty level so that the regular 
blocks are generated at a stable rate, say $1$ block per time unit.
Thus, the long-term average total revenue is fixed to be $1$ block reward per time unit with or without selfish mining. This makes the absolution revenue 
equivalent to the relative revenue. 
The situation is different in Ethereum. Even if the regular blocks are generated
at a stable rate, the average total revenue still depends on the generation rate of uncle blocks,
which is affected by selfish mining as we will see shortly.
Indeed, Ethereum didn't take into account the generation rate of uncle blocks 
when adjusting the difficulty level until its third milestone.
This motivates us to consider two scenarios in our analysis: 
1) the regular block generation rate is 
$1$ block per time unit, and 2) the regular and uncle block generation rate is $1$ 
block per time unit.

In our previous analysis, the regular block generation rate is $r_b^s + r_b^h$,
which is smaller than $1$ as explained before. Thus, we can re-scale the time
to make the regular block generation rate to be $1$ block per time unit.
In this scenario, the long-term absolute revenue for the selfish pool is
\begin{equation}\label{incen:in9}
   U_{s} = \frac{r_{b}^{s} + r_{u}^{s} + r_{n}^{s}}{r_b^s + r_b^h},
\end{equation} 
and the long-term absolute revenue for honest miners is
\begin{equation}\label{incen:in10}
   U_{h} = \frac{r_{b}^{h} + r_{u}^{h} + r_{n}^{h}}{r_b^s + r_b^h}. 
\end{equation} 

Similarly, we can re-scale the time to make the regular and uncle block generation rate 
to be $1$ block per time unit and define long-term absolute revenues for the selfish
pool and honest miners accordingly.

\subsubsection{Threshold Analysis}
First of all, if the selfish pool follows the mining protocol, its long-term average absolute revenue
will be $\alpha$, since the network delay is negligible (and so no stale blocks will occur).
On the other hand, if the pool applies the selfish mining strategy proposed in this paper,
its long-term absolute revenue is given by $U_s$, which can be larger than $\alpha$. 

Let $\alpha^*$ be the smallest value such that $U_s \ge \alpha$. That is, $\alpha^*$ is the threshold
of computational power that makes selfish mining profitable in Ethereum. We can determine $\alpha^*$
for both scenarios through numerical calculations. The details will be presented in the next section.

\section{Evaluation} \label{sec:evaluation}
In this section, we build an Ethereum selfish mining simulator to validate our theoretical analysis. In particular, we simulate a system with $n = 1000$ miners, each with the same block generation rate. In our simulations, the selfish pool controls at most $450$ miners (i.e., $\alpha \le 0.45$) and runs our Algorithm~\ref{Algo1}, while the honest miners follow the designed protocol in Sec.~\ref{sec:strategy}. 
%We consider the block propagation delay by introducing the stale rate $\delta$ in our simulation \cite{gervais2016security, ritz2018impact}. 
Our simulation results are based on an average of $10$ runs, where each run generates $100, 000$ blocks.

\subsection{Validation of the Theory Results}
In this subsection, we validate the long-term average absolute revenues for the selfish pool and honest miners.  Fig.~\ref{fig:Fig1_selfish} plots the results obtained from analysis and simulations. From the results, we can see when $\gamma = 0.5$, $K_u = 4/8K_s$ and $\alpha$ changes from $0$ to $0.45$, the simulation results match our theoretical results\footnote{To simplify the numerical calculations of our results, we only consider the states $(i,j)$ with $i$ and $j$ less than $200$. This approximation turns out to be accurate when $\alpha \le 0.45$.}. In addition, when $\alpha$ is above $0.163$, the selfish pool can always gain higher revenue from selfish mining than following the protocol. More importantly, when $\alpha$ is below the threshold $0.163$, the selfish pool loses just a small amount of revenue due to the additional uncle block rewards, which is quite different from the results in Bitcoin \cite{eyal2014majority}. 

\begin{figure}[ht]
\centering
\begin{minipage}[t]{8cm}
\centering
\setlength{\abovecaptionskip}{-5pt}
\setlength{\belowcaptionskip}{0pt}
\includegraphics[width=3in,height=2.7in]{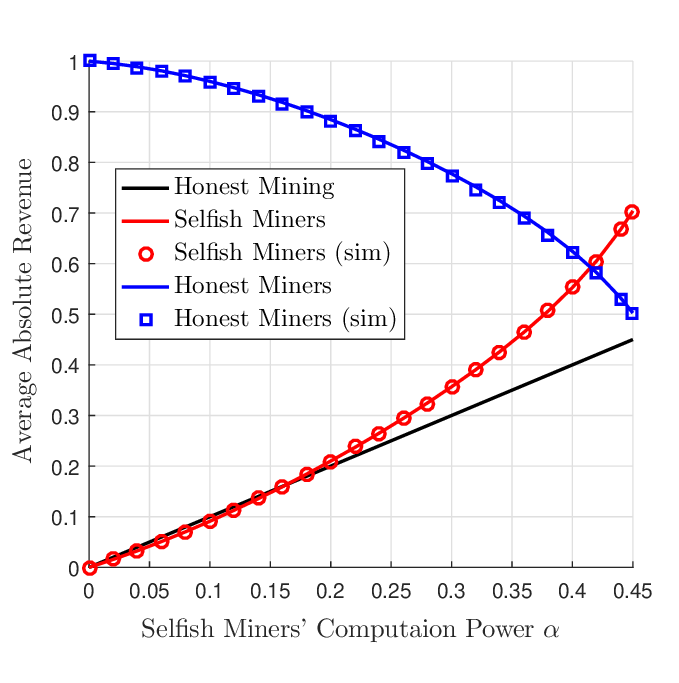}
\caption{Revenue rate for the selfish pool and honest miners when $\gamma = 0.5$, $K_u = 4/8K_s$ and $\alpha$ changes from $0$ to $0.45$.}\label{fig:Fig1_selfish}
\end{minipage}

\hspace{1.5cm}
\begin{minipage}[t]{8cm}
\centering
\setlength{\abovecaptionskip}{-5pt}
\setlength{\belowcaptionskip}{-10pt}
\includegraphics[width=3in,height=2.7in]{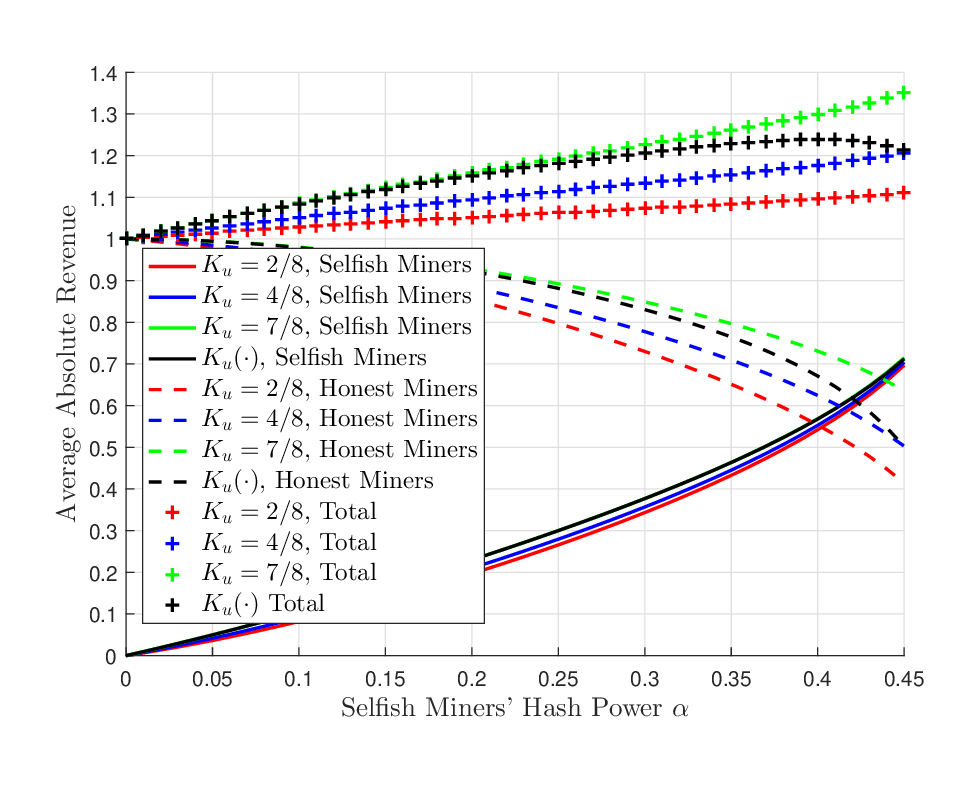}
\caption{Revenue rate for the selfish pool, honest miners under different uncle block reward $K_u$.}\label{fig:Fig2_selfish}
\end{minipage}
\end{figure}

\subsection{Impact of the Uncle Reward}
In this subsection, we explore the impact of the uncle block rewards on the selfish pool's and honest miners' revenues. To this end, we first use the uncle reward function $K_u(\cdot)$ in Ethereum (see Sec.~\ref{sec:revenue} for details) and then set the uncle reward as a fixed value regardless of the distance, ranging from $2/8 K_s$ to $7/8 K_s$. Here, the fixed uncle reward value can directly show its impacts and simplify our understanding. 

Fig.~\ref{fig:Fig2_selfish} shows that the higher the uncle reward, the more absolute revenue for both the selfish pool and honest miners, which is quite intuitive. It also reveals that the total revenue increases with the selfish pool's computation power $\alpha$ and soars to $135\%$ of the revenue without selfish mining, when $K_u = 7/8 K_s$ and $\alpha =0.45$. This is because, without the consideration of uncle blocks into difficulty adjustment, the selfish mining can produce additional uncle and nephew rewards, resulting in the fluctuation of total revenue. Additionally, the uncle reward function $K_u(\cdot)$ used in Ethereum has the same effect as simply setting $K_u = 7/8 K_s$ for selfish pool's revenue (as explained in Sec.~\ref{sec:revenue}).  In contrast, $K_u(\cdot)$ functions complicatedly for the honest miners' revenue. When $\alpha$ is small, its impact is similar to the case of $K_u = 7/8 K_s$, and when $\alpha$ is close to $0.45$, its impact is similar to the case of $K_u = 4/8 K_s$. This is because, with the increase of $\alpha$, the average referencing distances of honest miners' uncle blocks will increase, which further leads to the decrease of honest miners' average uncle rewards when using the function $K_u(\cdot)$. This finding motivates our discussion in Sec.~\ref{sec:solution}.

\begin{figure}[ht]
\centering
\begin{minipage}[t]{8cm}
\centering
\setlength{\abovecaptionskip}{-5pt}
\setlength{\belowcaptionskip}{-10pt}
\includegraphics[width=3in,height=2.7in]{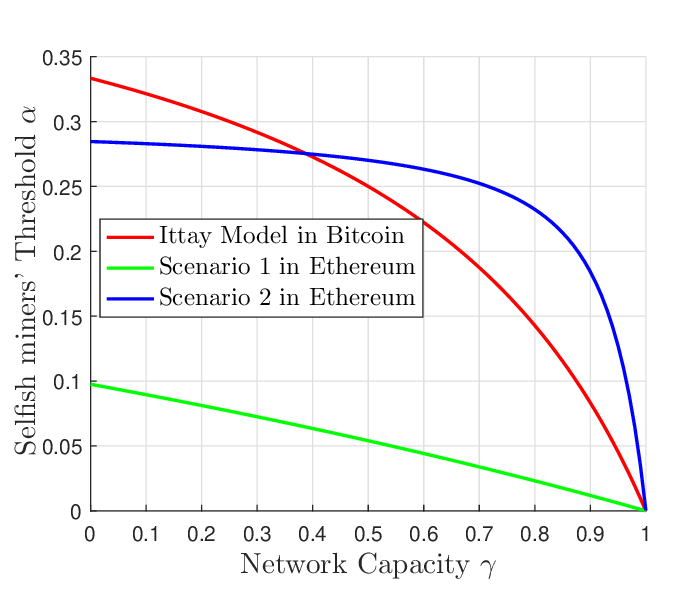}
\caption{The profitable threshold of hash power in Bitcoin and Ethereum.}\label{fig:threshold}
\end{minipage}
\end{figure}

\subsection{Comparison with Bitcoin}
In this subsection, we compare the hash power thresholds of making selfish mining profitable in Ethereum and Bitcoin under different values of $\gamma$. In Ethereum, we use function $K_u(\cdot)$ to compute the uncle rewards. Particularly, we compute the thresholds for the two scenarios described in Sec.~\ref{sec:absolute}: 1) the regular block generation rate is $1$ block per time unit, and 2) the regular and uncle block generation rate is $1$ block per time unit.

From Fig.~\ref{fig:threshold}, we can see that the higher $\gamma$ is, the lower hash power needed for making selfish mining profitable. Specifically, when $\gamma=1$, the selfish mining in Bitcoin and Ethereum can always be profitable regardless of their hash power. Besides that, the results show that the hash power thresholds of Ethereum in scenario $1$ are always lower than in Bitcoin. By contrast, the hash power thresholds in scenario $2$ are higher than Bitcoin when $\gamma \geq 0.39$. This is because the larger $\gamma$ is, the more blocks mined by honest miners are uncle blocks. However, in scenario $2$ the additional referenced uncle blocks will reduce the generation rate of regular blocks, resulting in the decrease of selfish pools' block rewards. Thus the selfish pool needs to have higher hash power in order to make selfish mining profitable. This suggests that Ethereum should consider the uncle blocks into the difficulty adjustment under the mining strategy given in Algorithm~\ref{Algo1}.

\section{Discussion} \label{sec:solution}
In Ethereum, uncle and nephew rewards are initially designed to solve the mining centralization bias---miners form or join in some big mining pools (Sec.~\ref{sec:pool}). This is because, due to propagation delay,  mining pools with huge hash power are less likely to generated stale blocks and can be more profitable for mining. Thus, rewarding the stale block can reduce the mining pools' advantage \cite{uncleincen} and make them less attractive for small miners. However, as analyzed previously, the uncle reward (computed by using the function $K_u(\cdot)$) can greatly reduce the cost of launching selfish mining. To mitigate this issue, here we propose a simple uncle reward function motivated by our analysis in Sec.~\ref{sec:revenuecomputing}. It shows that the uncle blocks mined by the selfish pool can always be referenced with block distance one, i.e., the maximum uncle reward $7/8 K_s$ using the function $K_u(\cdot)$. In contrast, the honest miners' uncle blocks cannot obtain such high rewards. To illustrate the explicit situation, we provide the distribution of the honest miners' uncle block with different referencing block distances in Table~\ref{tab:distribution} given $\gamma = 0.5$. The results show that with the increase of $\alpha$, the average referencing distance of honest miners' blocks is increasing. Thus, we should decrease the reward for uncle blocks with distance one and increase the reward for the uncle blocks with longer distances. In particular, we can simply set the uncle reward function $K_u(\cdot)$) as a fixed value, say $K_u = 4/8 K_s$, if uncle blocks' referencing block distance is between $1$ and $6$. We recompute the threshold of making selfish mining profitable using this new function and find that when $\gamma = 0.5$, the threshold increases from $0.054$ to $0.163$ in scenario $1$, and from $0.270$ to $0.356$ in scenario $2$. In other words, this simple change makes
it harder for the selfish pool to be profitable.

\begin{table}
\caption{The distribution of honest miners' uncle block with different referencing block distances}
\label{tab:distribution}
\begin{center}
\begin{tabular}{ccc}
\toprule
   Referencing distance & $\alpha = 0.3$ & $\alpha = 0.45$ \\
\midrule
    1   &  0.527& 0.284\\  
    2   &  0.295& 0.249\\  
    3   &  0.111& 0.171\\  
    4   &  0.043& 0.125\\  
    5   &  0.017& 0.096\\  
    6   &  0.007& 0.075\\   
\hline
    Expectation & 1.75 & 2.72 \\
  \bottomrule
\end{tabular}
\end{center}
\end{table}

\section{Related Work} \label{sec:related}
The research of selfish mining is mostly focused on Bitcoin with roughly two directions:
1) optimizing the selfish mining strategies in order to increase the revenue and lower the threshold of launching selfish mining attacks; 2) proposing defense mechanisms. In \cite{eyal2014majority}, Eyal and Sirer developed a Markov process to model the Selfish-Mine Strategy and to evaluate the selfish pool's relative revenue.
Moreover, they proposed a uniform tie-breaking defense against selfish mining, which is adopted in Ethereum.
Inspired by this seminal paper, Sapirshtein et al. \cite{sapirshtein2016optimal} and Nayak et al. \cite{nayak2016stubborn} demonstrated that by adopting the optimized strategies, the threshold of the hashing power to make selfish mining profitable can be reduced to $23.2\%$ even when honest miners adopt the uniform tie-breaking defense. Furthermore, the authors in \cite{gobel2016bitcoin} took the propagation delay into the analysis of selfish mining.
%strategies and showed that by monitoring the generation rate of stale blocks, honest miners can detect the block-hiding behavior used in selfish mining.

%In addition to the uniform tie-breaking defense \cite{eyal2014majority}, 
As for defense mechanisms, Heilman proposed a defense mechanism called Freshness Preferred \cite{heilman2014one}, in which by using the latest unforgeable timestamp issued by a trusted party, the threshold can be increased to $32\%$.  Bahack in \cite{bahack2013theoretical} introduced a fork-punishment rule to make selfish mining unprofitable. Especially, each miner in the system can include fork evidence in their block. Once confirmed, the miner can get half of the total rewards of the winning branch. Solat and Potop-Butucaru \cite{solat2016zeroblock} propose a solution called ZeroBlock, which can make selfish miners' block expire and be rejected by all the honest miners without using forgeable timestamps. In \cite{zhang2017publish}, the authors proposed a backward-compatible defense mechanism called weighted FRP which considers the weights of the forked chains instead of their lengths. This is similar in spirit to the GHOST protocol \cite{sompolinsky2015secure}.

There exist very few studies about the selfish mining attack in Ethereum. The work by Gervais et al. \cite{gervais2016security} is among the first to develop a quantitative framework to analyze selfish mining as well as double-spending in various PoW blockchains. 
%Their work has made a number of interesting and important findings. For instance, they noticed that ``the selfish mining is not always a rational strategy.'' 
Particularly, they developed optimal selfish-mining strategies for various PoW blockchains.
%\textbf{The Gervais et al. \cite{gervais2016security} first developed a quantitative framework to analyze the security of various PoW blockchains including Ethereum and had many interesting and important findings, e.g., $37$ confirmations in Ethereum to match Bitcoin's security with $6$ confirmations. 
However, their work didn't consider the general functions of uncle and nephew rewards. Instead, they focused on a special case when the uncle reward is always $\frac{7}{8}$ of the block reward.
The author in \cite{unclemining} proposed to exploit the flaw of difficulty adjustment to mine additional uncle blocks, which is shown less profitable than our selfish mining strategy. In \cite{ritz2018impact}  Ritz and Zugenmaier built a Monte Carlo simulation platform to quantify the security of the Ethereum after \textsc{EIP100}. However, their paper contains no mathematical analysis and cannot directly capture the effects of uncle block rewards and nephew rewards.  

\section{Conclusion And Future Work} \label{sec:conclude}
In this paper, we have proposed a Markov model to analyze a selfish mining strategy in Ethereum. Our model enables us to evaluate the impact of the uncle and nephew rewards, which is generally missing in the previous analysis for selfish mining in Bitcoin. 
In particular, we have shown how these rewards influence the security of Ethereum mining. Additionally, we have computed the hashing power threshold of making selfish mining profitable under different scenarios, which is essential for us to evaluate the security of Ethereum mining and to design
new reward functions.

As one of our major findings, we notice that it is important to consider uncle blocks when adjusting the mining difficulty level. Otherwise, Ethereum would be much more vulnerable to selfish
mining than Bitcoin. This finding supports the emendation adopted by the third milestone of Ethereum. However, once the mining mechanism is changed, 
the selfish pool is likely to change its new mining strategies in order to maximize its own profit. We leave the design of new mining strategies as our future work. We believe that our analysis developed in this paper (especially the probabilistic tracking) would be useful in studying other mining strategies. 

\section*{Acknowledgement}
We sincerely thank the anonymous reviewers for their valuable comments and feedback
as well as Arthur Gervais for useful discussions on his landmark paper \cite{gervais2016security}. This work was supported by NSERC Discovery Grants RGPIN-2016-05310.

\bibliographystyle{IEEEtran}
\bibliography{bibli}

\appendix
\subsection{The Multiple Summations Function} \label{appen:function}
The function $f(x, y, z)$ used in Sec.~\ref{sec:distribution} involves multiple summations when $z > 1$. We provide several examples to 
explain this function.

\begin{example}[$z = 1$, $x \ge y + 2$]
\begin{align*}
    f(x, y, 1) &= \sum_{s_1 = y + 2}^{x} 1 \\
    &= x - y - 1.
\end{align*}
\end{example}

\begin{example}[$z = 2$, $x \ge y + 2$]
\begin{align*}
    f(x, y, 2) &= \sum_{s_2 = y + 2}^x \sum_{s_1 = y + 1}^{s_2} 1 \\
    &= \sum_{s_2 = y + 2}^x (s_2 - y) \\
    &= 2 + \cdots + (x - y) \\
    &= \frac{(x - y - 1)(x - y + 2)}{2}.
\end{align*}
\end{example}

\subsection{Reward Analysis} \label{appen:reward}
\begin{lemma} \label{lemma:R1}
Consider a new block associated with a state transition from state $(i, j)$ with $i - j \ge 2$.
It will be a regular block with probability $1$ if and only if it is mined by the selfish pool.
\end{lemma}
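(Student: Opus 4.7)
\subsection*{Proof proposal for Lemma~\ref{lemma:R1}}

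My plan is to separate the two directions and, for both, reduce the claim to a single structural statement: starting from any state $(i,j)$ with $i-j \ge 2$, the pool's private branch is published in full and replaces every public branch with probability $1$. Once that structural statement is in hand, the lemma follows by reading off where Algorithm~\ref{Algo1} places the new block in each case.

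For the $(\Leftarrow)$ direction, I would observe that line~2 of Algorithm~\ref{Algo1} places the block mined by the pool at the tip of its private branch, so the new state is $(i+1,j)$ with $i+1-j \ge 3$. Hence it suffices to show that the entire private branch eventually becomes part of the system main chain. I would argue this as follows: the pool only ever abandons its private branch when $L_s < L_h$ (line~11), but from any state with lead at least $2$, the only way to decrease the lead is an honest arrival, and each honest arrival either (i) triggers a full publication when the lead was exactly $2$ (line~16, $L_s = L_h + 1$ branch, giving a transition into $(0,0)$ where the private branch of length $i+1$ overrides the public branch of length $\le j+1 \le i$), or (ii) keeps the lead at least $2$ (lines~19--21). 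In case (i) the block becomes regular, and in case (ii) the induction hypothesis applies to the new state. So the new block is regular as soon as the publication event occurs.

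For the $(\Rightarrow)$ direction, I would use the contrapositive. If the new block is mined by an honest miner at state $(i,j)$ with $i-j \ge 2$, then it sits at the tip of some public branch; it is not on the private branch because only the pool extends that branch. Combined with the structural statement above, the private branch of length $\ge i$ eventually overrides every public branch (of length at most $j+1 \le i-1$), so the honest block ends up off the system main chain with probability $1$, proving it is not regular.

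The main obstacle is justifying the ``eventually'' in the structural statement, i.e., that the publication event (the visit to $L_s - L_h = 1$, which triggers line~16 or the $L_s=L_h+1$ transition) occurs in finite time almost surely. I would handle this by appealing to the positive recurrence of the Markov process $(L_s(t), L_h(t))$ asserted in Section~\ref{sec:analysis}: every recurrent excursion must end at a state where a publication has occurred, because those are the only transitions that drain the probability mass back toward $(0,0)$. Alternatively, I could give a direct proof by viewing the lead $D(t) = L_s(t) - L_h(t)$ as a biased continuous-time random walk on $\{2,3,\dots\}$ with up-rate $\alpha$ and down-rate $\beta > \alpha$ (since the whole paper assumes $\alpha < 1/2$); standard hitting-time arguments then give $\Pr[D \text{ hits } 1] = 1$. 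Either route makes the ``with probability $1$'' quantifier rigorous and completes the proof.
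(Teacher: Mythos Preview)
Your proposal is correct and follows essentially the same route as the paper's proof: split on who mines the block, and in each case reduce to the claim that from any state with lead $i-j\ge 2$ the private branch is eventually published and adopted. The paper's own argument is in fact terser than yours---it simply asserts that ``the private branch will be published with probability $1$\ldots according to Algorithm~\ref{Algo1}'' and splits the honest-miner case into $i-j=2$ (immediate override) versus $i-j\ge 3$---so your explicit justification via positive recurrence or the biased walk on the lead $D(t)$ (using $\alpha<1/2$) supplies exactly the step the paper leaves implicit.
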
 
\begin{proof}
Suppose that the current system state is $(i, j)$ with $i - j \ge 2$. 
If the selfish pool mines a new block, then the state becomes $(i + 1, j)$.
Now, the private branch has an advantage of at least $3$ blocks 
(since $i + 1 - j \ge 3$) over public branches.
As the system evolves, the private branch will be published with probability $1$ and become
part of the system main chain, according to 
Algorithm~\ref{Algo1}. In other words, the new block will be a regular block with probability $1$.
On the other hand, if some honest miners mine a new block, we consider two cases.
\begin{enumerate}
    \item $i - j = 2$. In this case, we have $L_h = j + 1$ and $L_s = L_h + 1$. Hence, the pool 
    will publish its private branch and the new block becomes a stale block.
    \item $i - j \ge 3$. In this case, the system state becomes either $(i, j + 1)$ or $(i - j, 1)$.
    The private branch has an advantage of at least $3$ blocks (since $i - j \ge 3$). Hence, 
    it will be published with probability $1$. In other words, the new block will be a stale block
    with probability $1$.
\end{enumerate}
\end{proof}

We are now ready to analyze every state transition. We call a new block associated with a transition
a \emph{target} block.

\emph{Case 1: $(0,0) \overset{\beta}{\rightarrow} (0,0)$} 

In this case, the target block generated by some honest miners will be adopted by all the miners. Thus, it will be a regular block and receive a static reward $K_s$. 

\emph{Case 2: $(0,0) \overset{\alpha}{\rightarrow} (1,0)$} 

In this case, the selfish pool produces the target block, keeps it private, and continues mining on it. First, we analyze the static reward by determining whether the target block will be a regular block or not. To this end, we consider the following two subcases, which are illustrated in Fig.~\ref{fig:incen1}.

\begin{figure}[ht]
\centering
\includegraphics[width=0.8\linewidth]{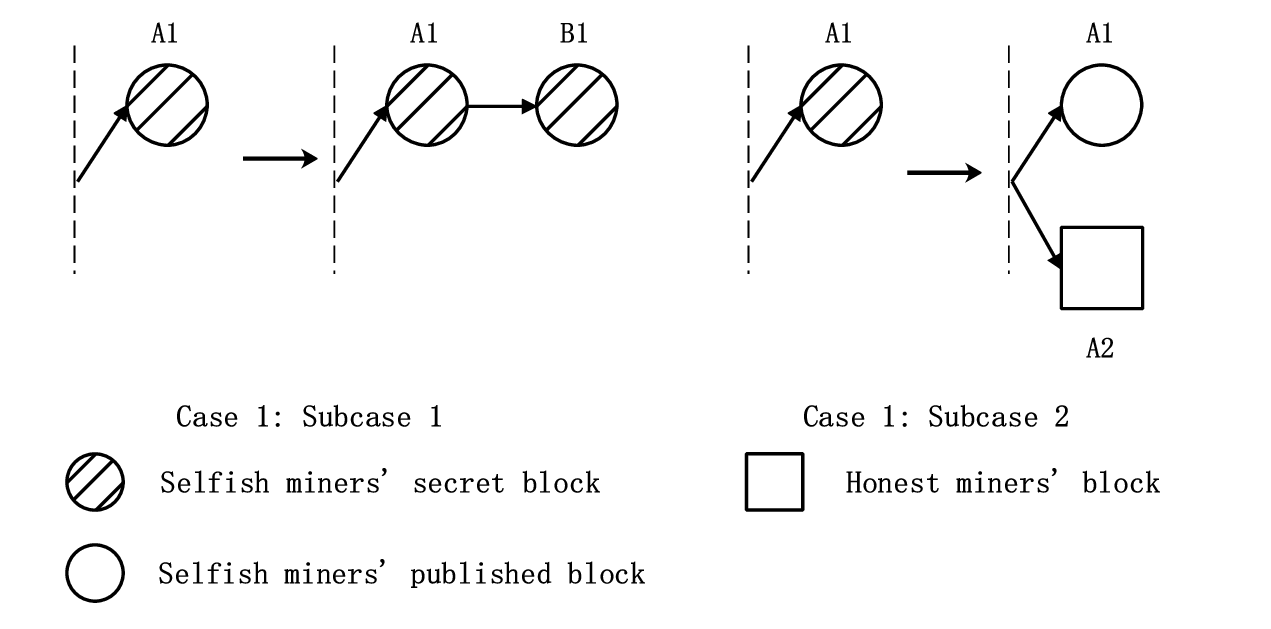}
\caption{The two subcases of Case 1: 1) the selfish pool mines a subsequent block; 2) some honest miners mine a new block.}
\vspace{-2mm}
\label{fig:incen1}
\end{figure}

\begin{enumerate}
    \item \emph{Subcase 1}: The subsequent block is mined by the pool, 
    which happens with probability $\alpha$. As a result, the pool owns a lead of two blocks. By Lemma \ref{lemma:R1}, the target block will be a regular block and receive a static reward of $K_s$. 
    \item \emph{Subcase 2}: The subsequent block is mined by some honest miner, which happens with probability $\beta$. Then, the pool will publish this target block. To determine whether it will be a regular block, we need to consider the following three subsubcases. See Fig. \ref{fig:incen2} for an illustration. 
    
    \emph{Subsubcase 1}: The pool mines a new block on its private branch and publishes it immediately. 
    (This happens with probability $\alpha$.) Now, the target block becomes a regular block. 
    
    \begin{figure}[ht]
    \centering
    \includegraphics[width=0.8\linewidth]{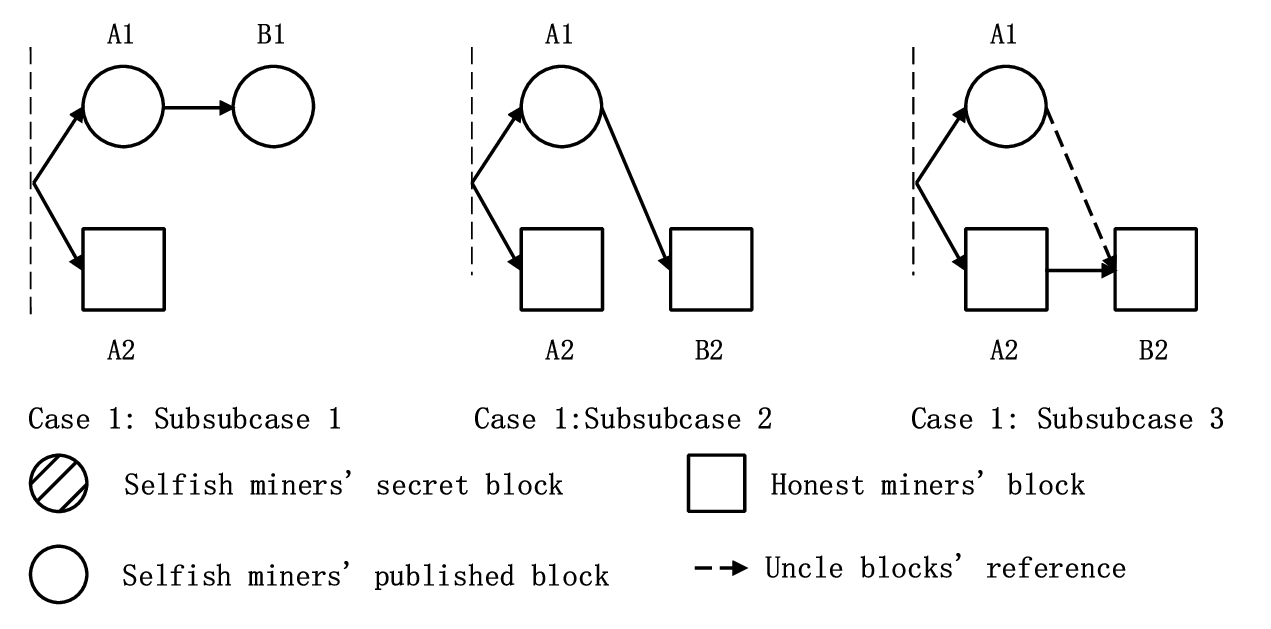}
    \caption{The three subsubcases of Case 1: 1) the selfish pool mines a new block and publishes it; 2) the honest miners find a new block on the target block; 3) some honest miners mine a new block and reference the target block.}
    \vspace{-2mm}
    \label{fig:incen2}
    \end{figure}
    
    \emph{Subsubcase 2}: Some honest miners mine a new block on the target block. (This happens with probability $\beta \gamma$.) Now, the target block becomes a regular block. 
    
    \emph{Subsubcase 3}: Some honest miners mine a new block and reference the target block. (This happens with probability $\beta (1 - \gamma)$.) Now, the target block becomes an uncle block. 
\end{enumerate}
To sum up, the target block in \emph{Case 2} will eventually be a regular block with probability $\alpha + \alpha \beta + \beta^2 \gamma$ and be an uncle block with probability $\beta^2 (1 - \gamma)$. 
(Note that these two probabilities sum up to $1$.)

Next, we analyze the uncle and nephew rewards associated with the target block. Based on our previous case-by-case discussion, only in subsubcase $3$, the target block will be an uncle block. 
Also, the distance between the target block and its nephew block is $1$. 
As such, the target block will bring the pool an uncle reward of $K_u(1)$ and some honest miners will receive a nephew reward of $K_n$. (This happens with probability $\beta^2 (1 - \gamma)$.)

\emph{Case 3: $(1,0) \overset{\alpha}{\rightarrow} (2,0)$} 

In this case, the pool produces the target block, keeps it private, and continues mining on it.
By Lemma \ref{lemma:R1}, the target block will be a regular block and receive a static reward
of $K_s$.

\emph{Case 4: $(1,0) \overset{\beta}{\rightarrow} (1,1)$}

In this case, some honest miners mine the target block, then the pool publishes its private block. 
First, we analyze the static reward by determining whether the target block is a regular block. To this end, we consider the following subcases. See Fig.~\ref{fig:incen3} for an illustration. 
\begin{enumerate}
    \item \emph{Subcase 1}: The pool mines a new block on its private branch, references the target block, and publishes its private branch. (This happens with probability $\alpha$.) Now, the target block becomes an uncle block. 

    \item \emph{Subcase 2}: Some honest miners mine a new block not on the target block and reference
    the target block. 
    (This happens with probability $\beta \gamma$.) Now, the target block becomes an uncle block.
    
    \item \emph{Subcase 3}: Some honest miners mine a new block on the target block. 
    (This happens with probability $\beta (1 - \gamma)$.) Now, the target block becomes a regular block. 
    
    \begin{figure}[ht]
    \centering
    \includegraphics[width=0.8\linewidth]{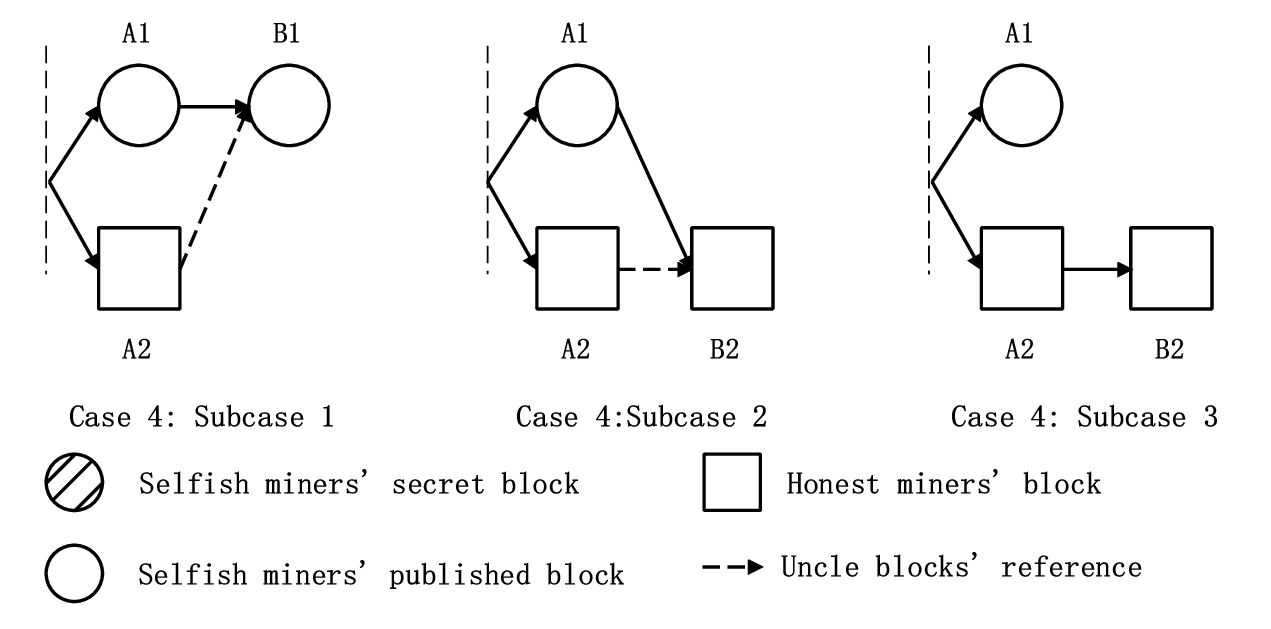}
    \caption{The three subcases of Case 4: 1) the selfish pool mines a new block on its branch and references the target block; 2) some honest miners find a new block not on the target block and reference the target block; 3) some honest miners mine a new block on the target block.}
    \vspace{-2mm}
    \label{fig:incen3}
    \end{figure}
\end{enumerate}
To sum up, the target block will eventually be a regular block with probability $\beta ( 1 - \gamma)$, and be an uncle block with probability $\alpha + \beta \gamma$. 

Next, we analyze the uncle and nephew rewards associated with the target block.  Based on our previous case-by-case discussion, the target block will become an uncle block only in subcases $1$ and $2$,
where the distance is $1$. Thus, the target block will bring honest miners an uncle reward of $K_u(1)$.
As for the nephew reward, in \emph{Subcase 1}, the pool receives it, and in \emph{Subcase 2}, some honest miners receive it. To sum up, honest miners will receive an uncle block reward of $K_u(1)$ with probability $\alpha + \beta \gamma$, receive a nephew reward of $K_n$ with probability $\beta \gamma$, and the pool will receive a nephew reward of $K_n$ with probability $\alpha$.

\emph{Case 5: $(1,1) \overset{1}{\rightarrow} (0,0)$} 

In this case, the target block will always be a regular block no matter who mines it.
Hence, the pool receives a static reward with probability $\alpha$, and some honest
miners receive a static reward with probability $\beta$.

\emph{Case 6: $(i,j) \overset{\alpha}{\rightarrow} (i+1,j)$ with $i \geq 2$ and $j \geq 0$} 

In this case, the pool mines the target block, keeps it private, and continues mining. By Lemma \ref{lemma:R1}, the target block will eventually become a regular block, receiving a static reward of $K_s$. 

\emph{Case 7: $(i,j) \overset{\beta \gamma}{\rightarrow} (i-j,1)$  with $i - j \geq 3$ and $j \geq 1$} 

In this case, some honest miners mine the target block on a public branch that is a prefix
of the private branch. Then, the pool publishes its first unpublished block. 
By Lemma \ref{lemma:R1}, the target block will eventually become an uncle block. 

Next, we analyze the uncle and nephew rewards associated with the target block. 
We begin with the special case of $(4, 1) \rightarrow (3, 1)$ before discussing the general case.
We consider the following three subcases. 

\begin{enumerate}
    \item \emph{Subcase 1}: The pool mines a subsequent block and references the target block.
    See Fig.~\ref{fig:incen4} for an illustration. (This happens with probability $\alpha$.)
    By Lemma~\ref{lemma:R1}, the target block will become an uncle block, receiving an uncle reward of $K_u(3)$. The pool will receive a nephew reward.
    
    \begin{figure}[ht]
    \centering
    \includegraphics[width=0.85\linewidth]{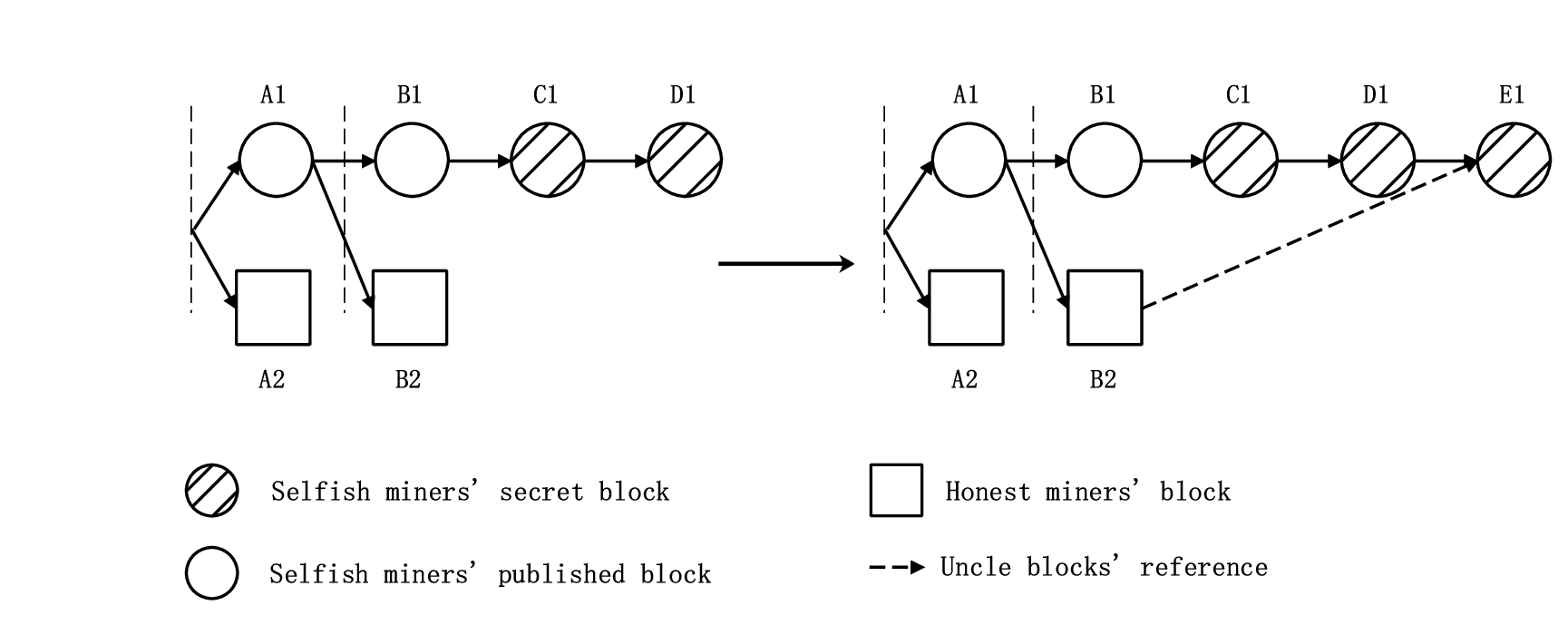}
    \caption{The subcase 1 of Case 7, in which the pool mines a subsequent block, references the target block and eventually wins the associated nephew reward.}
    \vspace{-2mm}
    \label{fig:incen4}
    \end{figure}
    
    \item \emph{Subcase 2}: Some honest miners mine a subsequent block on the target block.
    (This happens with probability $\beta (1 - \gamma)$.) Then, the pool will publish its private branch.
    See Fig. \ref{fig:incen5} for an illustration. To determine the uncle and nephew rewards,
    we need to consider the following subsubcases. 
    
    \begin{figure}[ht]
    \centering
    \includegraphics[width=0.8\linewidth]{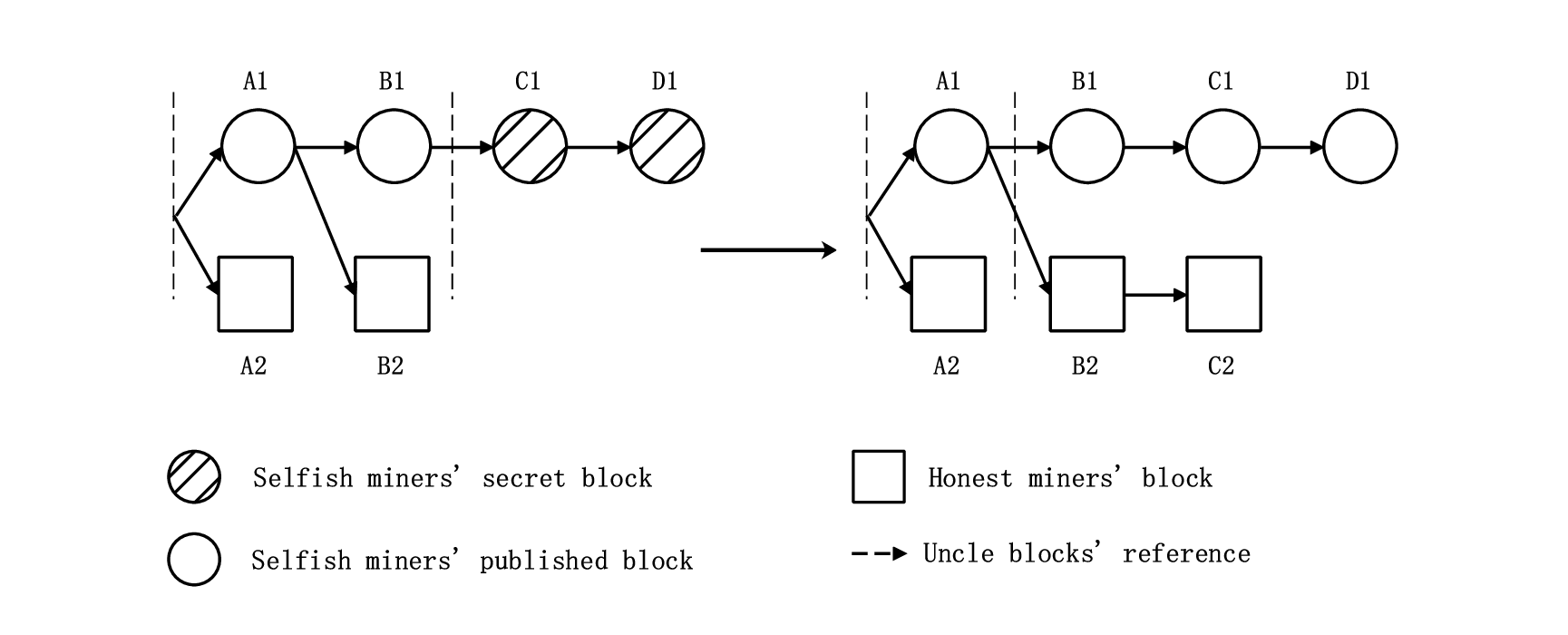}
    \caption{The subcase 2 of Case 7, in which some honest miners mine a subsequent block on the target block.}
    \vspace{-2mm}
    \label{fig:incen5}
    \end{figure}
    
    \emph{Subsubcase 1}: Some honest miners mine a new block and reference the target block.
    See Fig.~\ref{fig:incen6}. This subsubcase happens with probability $\beta (1 - \gamma) \beta$.
    The honest miner receives a nephew reward, and the target block receives an uncle reward of
    $K_u(3)$ since the distance is $3$.
    
    \begin{figure}[ht]
    \centering
    \includegraphics[width=0.85\linewidth]{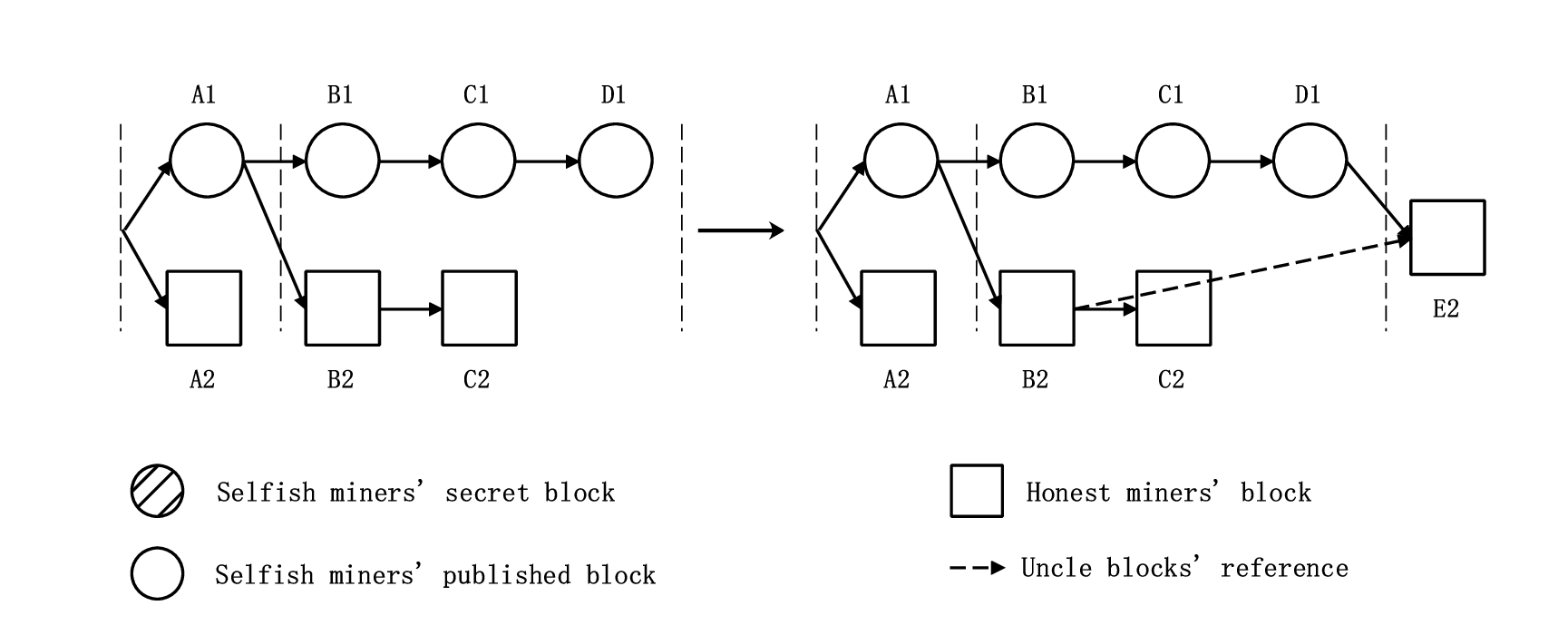}
    \caption{The subsubcase 1 of Case 7, in which some honest miners mine a block in state $(0,0)$ and win the associated nephew reward.}
    \vspace{-2mm}
    \label{fig:incen6}
    \end{figure}
    
    \emph{Subsubcase 2}: The pool mines a new block and keeps it private. 
    This subsubcase happens with probability $\beta (1 - \gamma) \alpha$.
    Now, if the new block later becomes a regular block (with probability 
    $\alpha + \alpha \beta + \beta^2 \gamma$ due to the discussion for  
    \emph{Case 2}), the pool will receive 
    a nephew reward and the target block will receive an uncle reward of $K_u(3)$.
    Otherwise, if the new block later becomes a stale block 
    (with probability $\beta^2 (1 - \gamma)$ due to the discussion for  
    \emph{Case 2}), some honest miners will receive 
    a nephew reward and the target block will again receive an uncle reward of $K_u(3)$.
    
    \begin{figure}[ht]
    \centering
    \includegraphics[width=0.8\linewidth]{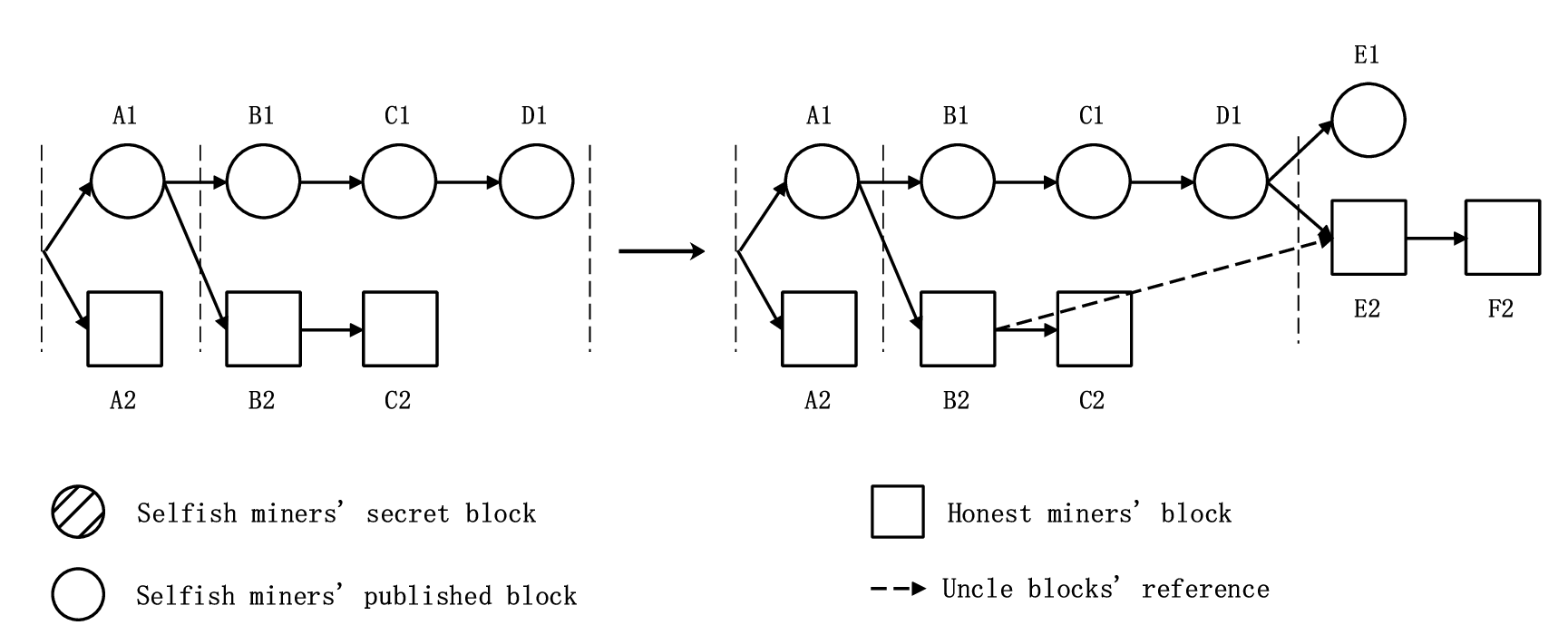}
    \caption{The subsubcase 2 of Case 7, in which the selfish pool mines a new block in state $(0,0)$, but finally loses to the nephew reward.}
    \vspace{-2mm}
    \label{fig:incen7}
    \end{figure}
    
    \item \emph{Subcase 3}: Some honest miners mine a subsequent block, not on the target block.
    (This happens with probability $\beta \gamma$.)
    In terms of the analysis of uncle and nephew rewards, \emph{Subcase 3} is the same as \emph{Subcase 2}.
\end{enumerate}

To sum up, for the special case of $(4, 1) \rightarrow (3, 1)$, the target block will always
receive an uncle reward of $K_u(3)$. As for the nephew reward, one can draw a tree diagram
summarizing all the subcases discussed above and conclude that it will be received by honest miners
with probability $\beta^2 (1 + \alpha \beta (1 - \gamma))$ and by the pool
with probability $1 - \beta^2 (1 + \alpha \beta (1 - \gamma))$.

Here, we note that the probability $\beta^2 (1 + \alpha \beta (1 - \gamma))$ (that the nephew reward will be received by honest miners) can be explained as follows. 
First, the honest miners have to ``push'' the system state from $(3, 1)$ to $(0, 0)$
while the pool mines nothing. (Otherwise, the pool will receive the nephew reward by 
Lemma~\ref{lemma:R1}.) This happens with probability $\beta$. Second, starting from
$(0, 0)$, the honest miners can win the nephew reward with probability $\beta (1 + \alpha \beta (1 - \gamma))$. This interpretation allows us to analyze the general case.

First, the honest miners have to ``push'' the system state from $(i-j, 1)$ to $(0, 0)$
while the pool mines nothing. This happens with probability $\beta^{i-j-2}$.
Second, starting from
$(0, 0)$, the honest miners can win the nephew reward with probability $\beta (1 + \alpha \beta (1 - \gamma))$. Therefore, the nephew reward will be received by honest miners with 
probability $\beta^{i-j-1} (1 + \alpha \beta (1 - \gamma))$ and by the pool 
with probability $1 - \beta^{i-j-1} (1 + \alpha \beta (1 - \gamma))$.

\emph{Case 8: $(i,j) \overset{\beta \gamma}{\rightarrow} (0,0)$ with $i - j=2$ and $j \geq 1$} 

In this case, some honest miners mine the target block. Then, the pool
publishes its private branch. Now, the target block becomes an uncle block. Similar to 
\emph{Case 7}, the target block will receive an uncle reward of $R_u(2)$, and the nephew 
reward will be received by honest miners with probability $\beta (1 + \alpha \beta (1 - \gamma))$
and by the pool with probability $1 - \beta (1 + \alpha \beta (1 - \gamma))$.

\emph{Case 9: $(2,0) \overset{\beta}{\rightarrow} (0,0)$ with $i \geq 2$} 

In this case, some honest miners mine the target block. Then, the pool
publishes its private branch. The remaining discussion is the same as \emph{Case 8}.

\emph{Case 10: $(i,0) \overset{\beta}{\rightarrow} (i,1)$ with $i \geq 3$} 

In this case, some honest miners mine the target block. Then, the pool
publishes its first unpublished block. By Lemma~\ref{lemma:R1},
the target block will eventually become an uncle block.
Similar to the discussion for \emph{Case  7},
we conclude that the target block will receive an uncle reward of $K_u(i)$,
and the nephew reward will be received by honest miners with 
probability of $\beta^{i-1} (1 + \alpha \beta (1 - \gamma))$
and by the pool with probability $1 - \beta^{i-1} (1 + \alpha \beta (1 - \gamma))$.

\emph{Case 11: $(i,j) \overset{\beta (1-\gamma)}{\rightarrow} (i,j + 1)$ with $i-j \geq 3$ and $j \geq 1$} 

In this case, some honest miners mine the target block. Then, the pool
publishes its first unpublished block. By Lemma~\ref{lemma:R1}, the target block
will eventually become a stale block. Since its parent block is not in the system main chain,
the target block will not be an uncle block.

\emph{Case 12: $(i,j) \overset{\beta (1-\gamma)}{\rightarrow} (0,0)$ with $i -j=2$ and $j \geq 1$} 

In this case, some honest miners mine the target block. Then, the pool
publishes its private branch. Similar to \emph{Case 11}, the target block will neither be a regular block nor an uncle block.

\end{document}